
\newcommand{\authorrefnametobedeleted}[1]{#1}
\renewcommand{\authorrefnametobedeleted}[1]{}

\documentclass[dvips,bjps]{imsart}

\RequirePackage[OT1]{fontenc}
\usepackage{amsthm,amsmath,amssymb,natbib}
\RequirePackage{hyperref}

\DeclareMathOperator*{\esssup}{ess~sup}
\DeclareMathOperator*{\essinf}{ess~inf}

 \newcommand{\op}[1]{{\mathrm{#1}}}
 \newcommand{\kl}[1]{\ensuremath{\mathrm{#1}}}
\renewcommand{\d}{\hspace{0.1ex}\mathrm{d}\hspace{-0.1ex}}

 \newcommand{\avdot}{\:\cdot\:}
 \newcommand{\E}{\op{E}}                 %
 \renewcommand{\Pr}{\op P}
 \newcommand{\Pj}{\Pr_{\negthinspace j}}

 \newcommand{\sn}[1]{s^{(n)}_{#1}}
 \newcommand{\period}{q}
 \newcommand{\units}{u}
 \newcommand{\JODD}{J}
 \newcommand{\ELL}{\kappa}
 \newcommand{\jT}{j_*}

\arxiv{1206.5756}

\startlocaldefs
\numberwithin{equation}{section}
\theoremstyle{plain}
\newtheorem{thm}{Theorem}[section]
\newtheorem{cor}[thm]{Corollary}

\newtheorem{prop}[thm]{Proposition}
\theoremstyle{definition}
\newtheorem{defn}[thm]{Definition}
\newtheorem{assume}[thm]{Assumption}

\newtheorem{asno}[thm]{Assumption/notation}
\newtheorem{rem}[thm]{Remark}
\theoremstyle{remark}

\newcommand{\slutt}{}
\endlocaldefs

\begin{document}

\begin{frontmatter}
  \title{On free lunches in random walk markets with short-sale
    constraints and small transaction costs, and weak convergence to
    Gaussian continuous-time processes\thanksref{t1}} %
  \runtitle{Free lunches in random walk markets} \thankstext{t1}{This
    paper has benefited from discussions with in particular Albert
    Shiryaev and Gisle James Natvig, and (indeed, more than usual)
    from comments from the journal's referees.}

\begin{aug}
  \author{\fnms{Nils Chr.}
    \snm{Framstad}\thanksref{a}\thanksref{b}\ead[label=e1]{ncf+research@econ.uio.no}},

\runauthor{N.C. Framstad}

\affiliation[a]{University of Oslo, Department of Economics}
\affiliation[b]{The Financial Supervisory Authority of
  Norway\footnote{Content does not reflect the views of The Financial
    Supervisory Authority of Norway.}}

\address{University of Oslo, Department of
    Economics, P.O.\ Box 1095 Blindern, NO-0317 Oslo, Norway.\\\printead{e1}}

\end{aug}

\begin{abstract}
  This paper considers a sequence of discrete-time random walk markets
  with a safe and a single risky investment opportunity, and gives
  conditions for the existence of arbitrages or free lunches with
  vanishing risk, of the form of waiting to buy and selling the next
  period, with no shorting, and furthermore for weak convergence of
  the random walk to a Gaussian continuous-time stochastic
  process. The conditions are given in terms of the kernel
  representation with respect to ordinary Brownian motion and the
  discretisation chosen. Arbitrage and free lunch with vanishing risk
  examples are established where the continuous-time analogue is
  arbitrage-free under small transaction costs -- including for the
  semimartingale modifications of fractional Brownian motion suggested
  in the seminal \citet{MR1434408} article proving arbitrage in fBm
  models.
\end{abstract}

\begin{keyword}[class=AMS]
\kwd[Primary ]{91G99}
\kwd{60B10}
\kwd[; secondary ]{60E05}
\kwd{60F05}
\end{keyword}

\begin{keyword}
\kwd{Stock price model}\kwd{random walk}\kwd{Gaussian processes}\kwd{weak convergence}
\end{keyword}

\end{frontmatter}

\section{Introduction}

If a continuous-time model for a financial market is discretised in
time, will then the discretised version inherit its properties when it
comes to free lunches, or absence of such? Asking the converse
question: if a continuous-time model is the weak limit -- ``weak''
because this topology gives neighbouring profits/loss process
distributions for a given strategy -- of a sequence of discrete-time
models, will free lunch properties or no free lunch properties carry
over the limit transition?

There is actually no guarantee that this will be the case. In
Shiryaev's book \citeyearpar[sec.\ VI.3]{MR1695318}, there are given
stronger sufficient conditions for convergence to fair prices in terms
of weak convergence of the \emph{(driving noise, pricing kernel)
  pair}.  This paper will show that if this joint convergence fails,
then there is a wide range of problems where the arbitrage properties
differ between the discretised prices and their weak limits, even when
small transaction costs are introduced to the former.

This author's initial interest in the problem at hand, emerges from a
work by \authorrefnametobedeleted{Sottinen }\citet{MR1849425}, who
establishes a sequence of discrete-time binary symmetric random walk
(semimartingale) markets, which (a) converges weakly to a
Black--Scholes market with prices being geometric fractional Brownian
motion with Hurst parameter $H>1/2$, and (b) admits an arbitrage
obtained by waiting for the right moment to buy (if nonnegative drift)
or short sell (if non-positive drift) the stock, and unwinding the
position the very next period; the ``right moment'' is of course when
you might with probability one know that the stock market beats the
money market even if tomorrow is a bad day (in which case you buy), or
waiting for the conversely adverse stock market (in which case you
short-sell). Now fractional Brownian motion is not a semimartingale,
and as is well known since \authorrefnametobedeleted{Rogers
}\citet{MR1434408} (for the positively autocorrelated parameter
range), it will introduce arbitrages to canonical models where the
ordinary Brownian motion does not.  In view of this, there seems to
have been a view that the result of \authorrefnametobedeleted{Sottinen
}\citet{MR1849425} is due to specifics of the fBm, or at least its
non-semimartingale property, and this author admits to having fallen
prey to this interpretation, which -- as we shall see -- is
inaccurate.

This paper sets out to show that the phenomenon discovered by
\citet{MR1849425}, is to be expected way more generally, including in
the discretisation of arbitrage-free semimartingale price processes.
As an example, we refer to \citet{MR1434408}, who also proposes a
parametrised semimartingale process whose moving average kernel
converges to the fBm's -- preserving the long memory which was the
reason for suggesting fBm as a driving noise in the first place, but
eliminating the short memory which caused the arbitrage.  It turns out
that when attempting to discretise in a manner akin to the
construction of \citet{MR1849425}, the \emph{long\/} memory will
introduce arbitrages, and the arbitrage property is robust enough to
withstand even the introduction of a small transaction cost.  It is
then essential that the discretised version has bounded downside (cf.\
the results of \authorrefnametobedeleted{Guasoni and coauthors in
}\citet{MR2239592}, \citet{MR2398764}). A different example, admitting
free lunch with vanishing risk (FLVR) in the discretisation, is the
Ornstein--Uhlenbeck process.

We shall on one hand give sufficient conditions for the existence of
arbitrage or FLVR of the form (i) wait for a possible time to buy, and
then (ii) sell next period. On the other, we give sufficient
conditions for weak convergence of the discrete random walks to the
Gaussian continuous-time counterparts.  Because the results concerning
arbitrages will require \emph{bounded\/} innovations in the random
walks, the weak convergence result (Theorem~\ref{weakconv}) will also
be restricted to this case. Our main contributions compared to the
previous literature (primarily \citet{MR1849425}), are summarised as
follows:
\begin{itemize}
\item We cover a fairly general class of Gaussian processes, and give
  examples to the existence of arbitrages/FLVRs of the above-mentioned
  form.
\item Furthermore, we point out that the arbitrage for discretised fBm
  can emerge from the (originally desirable) long-run memory of the
  process, even if the short-run memory (which causes the arbitrage in
  the continuous-time model) is modified as to obtain the
  semimartingale property, e.g.\ as suggested by
  \authorrefnametobedeleted{Rogers }\citet{MR1434408}.
\item We cover any negative drift term (a word which should be
  interpreted cautiously for non-semimartingales) without shorting, as
  it turns out that the instantaneous growth from the noise term can
  tend to infinity.
\item For the same reason the arbitrage may also admit sufficiently
  small \emph{transaction costs}.
\item We do not have to assume the discretised market to be binary
  (hence complete if arbitrage-free) with symmetric innovations. We
  will however assume bounded support, where the bound might depend on
  how fine the discretisation.
\item Weak convergence of the driving noise is likewise shown in this
  more general setting.
\end{itemize}

\section{The continuous-time and discrete-time market
  models} \label{sec:Market} %
Our market has one ``safe'' asset, taken as num\'eraire and normalised
to price $=1$, and one ``risky'' asset $S^{(n)}$, which for each $n$
is a discretisation of a continuously evolving stochastic process $S$.
$S$ will be constructed from a drift process $A$ with time-derivative
$a(t)$ and a driving noise $Z$, assumed to be a Gaussian moving
average process with right-continuous sample paths and an adapted
(hence upper limit of integration is $t$) kernel representation
\begin{equation}\label{Z}
\begin{split}
    Z(t)%
    &=\int_{-\infty}^{t_0} K(t,s) \d W(s) + \int_{t_0}^t
K(t,s) \d W(s) \\&= \JODD(t) + \int_{t_0}^t K(t,s) \d W(s),\end{split}
\end{equation}
with respect to standard Brownian motion $W$, where $K$ is a given
function satisfying the following properties:
\begin{subequations}
\begin{gather}
\text{$K$ is deterministic and piecewise-continuous,}\label{Ka}\\
K(t,s)=0 \quad\text{ if }\quad s\geq t
\intertext{and}
\int_{-\infty}^t(K(t,s))^2\d s<\infty, \quad\forall t.\label{Kb}
\end{gather}\label K 
\noindent We assume that the agent enters the market at a given time
$t_0\geq0$.
\end{subequations}
\\

Notice that \eqref{Kb} follows from the assumed Gaussian distribution,
as the expression is in fact $\E[(Z(t))^2]$. Notice also that some
representations involve $K$ with a definition split in order to
achieve square integrability, compensating the distant past. A
frequently occurring representation form (cf.\ e.g.\
\citet{MR2024843}) is $K(t,s)=\ELL(t-s) - \ELL(-s)$, and under the
assumption that the kernel vanishes for $s\geq t$, then we have
$\ELL(-s)=0$ for $s\geq0$. We shall later give particular attention to
such a form of the type
\begin{math}K(t,s)=\ELL((t-s)_+)\end{math} for $t>s>t_0$, where we not
need to specify the definition below $t_0$.
\\

We might choose to discretise $W$ on the entire time line; however,
$\JODD$ will merely enter as a drift term, and we can equally well
discretise $\JODD$ directly. We shall choose to do the latter. Hence
we start by discretising the time scale (equidistantly) in intervals
of length $1/n$, where for each $n$ we define
\begin{gather}
  \sn i=t_0+\tfrac {i-\lfloor n t_0\rfloor}n,\label{sn}
\end{gather}
where $\lfloor\avdot\rfloor$ is the floor function (rounding toward
$-\infty$). Then we discretise $W$ for $t>t_0$ by replacing its
normalised increments $n^{1/2}\cdot\big[W(\sn{i+1})-W(\sn i)\big]$ by
random variables $\xi_{i+1}=\xi_{i+1}^{(n)}$.  Now discretise $Z$ into
\begin{align}\label{Zn} 
  Z^{(n)}(t)&=\JODD(\tfrac{\lfloor nt\rfloor}n)
  +\sum_{i=\lfloor n t_0\rfloor}^{\lfloor nt\rfloor -1} 
  K(\tfrac{\lfloor nt\rfloor}n, \sn i)\cdot
  n^{-1/2}\xi^{(n)}_{i+1}
\end{align}
For $A$ we can take $A(t_0)=0$, as we are interested in increments
only; we therefore define $A$ and its discretisation as
\begin{align}
    A(t)%
&=\int_{t_0}^ta(s)\d s, &    A^{(n)}(t)&=\frac1n\sum_{i=\lfloor n t_0\rfloor}^{\lfloor nt\rfloor -1} a(\sn i)\label{A}
\intertext {and finally, $S$ and its discretisation are assumed,
  resp.\ defined, to satisfy} S&={G(}A+Z{)} &
S^{(n)}&={G(}A^{(n)}+Z^{(n)}{)}.\label{G}
\end{align}
The canonical choice is $G$ to be the exponential function, but we
shall not need this specific property; for
Proposition~\ref{simplevsfull} we will however use convexity, and for
Theorem~\ref{weakconv} we shall need continuity. Except when $K$
vanishes, the $S^{(n)}$ and the $\xi_i$ sequence will generate the
same filtration, so the first of the following assumptions is not very
restrictive:
\begin{asno} We assume formulae~\eqref{Z} through~\eqref{G} to hold,
  and furthermore:
\begin{itemize}
\item The filtration will be generated by the $\{\xi_i\}$, so that the
  information at time $t$, is generated by $\{\xi_i\}_{i\leq tn}$.
\item By ``step number $j$'', we shall mean at time $\sn j$. That
  means that the agent's first chance of trading, is not at step $0$,
  but at prices noted at step $j_0:=\lfloor nt_0\rfloor$. Should this
  lead to a singularity due to e.g.\ $t_0=0$, $K(t,0)=+\infty$, then
  we shall however eliminate this by assuming (without mention) that
  $t_0$ is $>0$ and irrational.
\item We shall use the term ``$j$-measurable'' to mean measurable at
  \emph{step number\/} $j$, i.e.\ at time $\sn j$, and write
  $\Pj=\Pj^{(n)}$ for the probability measure conditional on the
  filtration generated up to this time/step and $\E_j=\E_j^{(n)}$ for
  the corresponding conditional expectation.
\item The $\{\xi_{i}^{(n)}\}_{i,n}$ will be mutually independent and
  each $\xi_{i}^{(n)}$ bounded, and there exists some (common)
  constant $\nu>0$ such that $\essinf \xi_{i}^{(n)}<-\nu$ and
  $\esssup\xi_{i}^{(n)}>\nu$.
\item Since $\xi_j$ is independent of the past, we shall suppress the
  dependence of law in terms like e.g.\ $\esssup_{\xi_j}$ which will
  denote the supremum over the ($\Pj${}-)essential support of $\xi_j$.
\item Two pieces of notation: $K'_1$ shall denote the partial
  derivative with respect to the first variable. The symbol
  $\eqslantgtr$ shall mean ``no smaller than and not a.s.\ equal''.
\item $a$ is assumed locally bounded, and $G$ is assumed continuous
  and strictly increasing.\slutt
\end{itemize}
\end{asno}

It should be remarked that it is unreasonable for an approximation to
normalised standard Brownian motion that $\nu<1$, but only in parts of
Theorems~\ref{thm:L} and~\ref{thm:FLVR} shall we actually need that
$0$ is interior in the support. Bounded support will however be
essential for the arbitrage conditions, and the following result will
be simplified by assuming a common bound:

\begin{thm}[Weak convergence]\label{weakconv} 
Suppose
  $\E[\xi_i]=0$, $\E[\xi_i^2]=1$ and $\esssup |\xi_i|\leq M<\infty$ (all $i$,
  $n$).
  Then $Z^{(n)}$ converges weakly to $Z$, and for continuous $G$ also
  $S^{(n)}$ to $S$, on the Skorohod space $D([t_0,T])$, every $T>t_0$.
  \end{thm}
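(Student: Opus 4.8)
The plan is to establish weak convergence in two stages: first the finite-dimensional distributions, then tightness on $D([t_0,T])$. For the finite-dimensional distributions, fix times $t_0\le\tau_1<\dots<\tau_m\le T$. Observe that, by \eqref{Zn}, the vector $(Z^{(n)}(\tau_1),\dots,Z^{(n)}(\tau_m))$ minus the deterministic shift $(\JODD(\lfloor n\tau_k\rfloor/n))_k$ is a linear combination of the independent, bounded, mean-zero, unit-variance variables $\{n^{-1/2}\xi_{i+1}^{(n)}\}$, with deterministic coefficients $K(\lfloor n\tau_k\rfloor/n,\sn i)$. I would apply a Lindeberg-type multivariate central limit theorem (or equivalently the Cramér--Wold device together with the Lindeberg--Feller theorem): the Lindeberg condition holds because each summand is bounded by $M\cdot n^{-1/2}\cdot\sup|K|$ on the relevant compact region, which is $o(1)$ uniformly, while the covariance matrix of the $k$-th and $\ell$-th coordinates is $\sum_i n^{-1}K(\lfloor n\tau_k\rfloor/n,\sn i)K(\lfloor n\tau_\ell\rfloor/n,\sn i)$. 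This sum is a Riemann sum for $\int_{t_0}^{\tau_k\wedge\tau_\ell}K(\tau_k,s)K(\tau_\ell,s)\d s$, which converges by piecewise continuity of $K$ (assumption \eqref{Ka}) and dominated convergence, using \eqref{Kb}; the deterministic shift converges to $\JODD(\tau_k)$ by right-continuity. Hence the finite-dimensional distributions converge to those of the Gaussian process $Z$ with the stated kernel representation \eqref{Z}.

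For tightness on the Skorohod space I would verify a moment criterion of the form $\E\big[|Z^{(n)}(t)-Z^{(n)}(r)|^2\,|Z^{(n)}(v)-Z^{(n)}(t)|^2\big]\le C\,(H(v)-H(r))^2$ for $r\le t\le v$, with $H$ a nondecreasing continuous function — this is the standard sufficient condition (e.g. Billingsley, Theorem 15.6, or Jacod--Shiryaev VI.4) for tightness in $D([t_0,T])$ that also rules out jumps in the limit. Because the increments of $Z^{(n)}$ over disjoint time intervals are \emph{not} independent (the kernel couples them), I would instead bound fourth moments directly: write $Z^{(n)}(t)-Z^{(n)}(r)$ as $\sum_i c_i^{(n)}(t,r)\,n^{-1/2}\xi_{i+1}^{(n)}$ plus the deterministic part, expand the product of two such squared sums, and use independence and boundedness of the $\xi$'s to reduce everything to sums of products of the coefficients; these sums are controlled by the $L^2$ Riemann sums that converge to $\int K(t,\cdot)^2$-type integrals, giving a modulus of the required product form with $H$ built from $\int_{t_0}^{\cdot}\sup_{u}K(u,s)^2\d s$ restricted to compacts plus the variation of $\JODD$. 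The $\JODD$ term is a pure (deterministic, right-continuous) drift, so its contribution to tightness is immediate once one notes it has at most countably many jumps, none of which survives in the limit since they are already present in $Z$.

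Once weak convergence of $Z^{(n)}$ to $Z$ on $D([t_0,T])$ is in hand, the convergence of $S^{(n)}=G(A^{(n)}+Z^{(n)})$ to $S=G(A+Z)$ follows by the continuous mapping theorem: $A^{(n)}(t)=\frac1n\sum a(\sn i)\to\int_{t_0}^t a(s)\d s=A(t)$ uniformly on $[t_0,T]$ because $a$ is locally bounded (the discretisation is a Riemann sum with bounded integrand, so the convergence is uniform and the limit is continuous), hence $A^{(n)}\to A$ in $D([t_0,T])$ and in fact in the uniform topology; addition is continuous at a limit point whose paths have no common jumps with the other summand (here $A$ is continuous), and $G$ is assumed continuous, so composition with $G$ is continuous on the relevant subset of $D([t_0,T])$. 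I expect the main obstacle to be the tightness estimate: the dependence structure induced by $K$ means one cannot quote a theorem for processes with independent increments, and care is needed to get the \emph{product} bound $(H(v)-H(r))^2$ rather than merely a bound on each factor separately — the non-negativity and piecewise continuity of the kernel, together with splitting the double sum over the index ranges below $t$ and above $t$, is what makes the cross-terms factorise appropriately.
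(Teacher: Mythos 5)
Your first and third stages match the paper's proof: finite-dimensional convergence is obtained there exactly as you propose (CLT for the triangular array plus the Riemann-sum computation of $\E[Z^{(n)}(T)Z^{(n)}(t)]\to\int_{t_0}^t K(T,s)K(t,s)\,\d s$), and the drift and $G$ are disposed of at the outset by taking $A=A^{(n)}=\JODD=0$ and invoking the continuous mapping theorem. The divergence, and the problem, is in your tightness step.

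The paper does not use the Billingsley product-moment criterion; it verifies a Kurtz/Whitt-type condition (\citet[Lemma 3.11 (ii.b)]{MR2368952}): the \emph{conditional second moment} $\E_j[(Z^{(n)}(t+h)-Z^{(n)}(t))^2]$ is bounded, uniformly in $t$ and in large $n$, by $M^2\int_{t_0}^{\bar t+h}(K(\bar t+h,s)-K(\bar t,s))^2\d s$, which tends to $0$ as $h\downarrow0$ -- no rate is needed. Your route has two concrete weak spots. First, the modulus you propose, $H$ built from $\int_{t_0}^{\cdot}\sup_u K(u,s)^2\,\d s$, controls only the ``new innovation'' indices $s_i\in[r,t)$; for a moving-average process the dominant contribution to $Z^{(n)}(t)-Z^{(n)}(r)$ comes from the indices $s_i<r$, where the coefficient is the \emph{difference} $K(\lfloor nt\rfloor/n,s_i)-K(\lfloor nr\rfloor/n,s_i)$ in the first argument, and this is not majorised by an integral of $\sup_u K(u,s)^2$ over $[r,t]$. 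What is actually needed is the $L^2$-modulus of $t\mapsto K(t,\cdot)$, which is precisely the quantity the paper's estimate isolates. Second, and more structurally, the product criterion requires $\E[|Z^{(n)}(t)-Z^{(n)}(r)|^2|Z^{(n)}(v)-Z^{(n)}(t)|^2]\le(H(v)-H(r))^{2\alpha}$ with $2\alpha>1$; after your factorisation this forces $\int(K(t,s)-K(r,s))^2\d s$ to be controlled by an increment of a single nondecreasing continuous $H$ raised to a summable power. The hypotheses of the theorem impose no H\"older rate on this $L^2$-modulus (only \eqref{Ka}--\eqref{Kb}), so for sufficiently rough kernels (e.g.\ $\ELL(\vartheta)=\vartheta^{H-1/2}$ with small Hurst index) no such fourth-moment bound with exponent $>1$ exists, even though the process is tight; one would have to pass to higher moments with a kernel-dependent order, or abandon the product criterion. (There is also the routine matter that the product bound must be checked for triples $r\le t\le v$ straddling a grid point, handled by noting that when $v-r<1/n$ one of the two factors vanishes.) The rate-free conditional-second-moment condition is what makes the paper's argument go through under its weak assumptions; as written, your tightness step does not close.
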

\begin{proof}
  The drift and the already occurred part will represent no issue, and
  we can take $A=A^{(n)}=\JODD=0$ without loss of generality. Also, we
  can take $G$ to be the identity, as weak limits commute with
  continuous functions $G$. Now convergence in finite-dimensional
  distributions follows like in \citet[Theorem 1]{MR1849425}: by the
  CLT, the limit is Gaussian with zero mean; for the covariances, the
  independence of the $\xi_i$'s yields, for $T\geq t\geq t_0$
  \begin{align}
    \E[Z^{(n)}(T)Z^{(n)}(t)]&=\sum_{i=\lfloor n t_0\rfloor}^{\lfloor
      nt\rfloor -1} K(\tfrac{\lfloor nT\rfloor}n, \sn i)
    K(\tfrac{\lfloor nt\rfloor}n, \sn i)\cdot
    \E\Big[\Big(\frac{\xi^{(n)}_{i+1}}{\sqrt n}\Big)^2\Big]
  \end{align}
  which is a Riemann sum converging to the desired value $
  \int\limits_{t_0}^tK(T,s)K(t,s)\d s$.

  It remains to prove tightness, which will follow by a set of
  sufficient conditions given in \authorrefnametobedeleted{Whitt
  }\citet[Lemma 3.11 (ii.b)]{MR2368952}. For $T\geq t+h\geq t\geq
  t_0$, we have
  \begin{align}
    \E_{j}[(Z^{(n)}(t+h)&-Z^{(n)}(t))^2]\notag\\&=%
    \sum_{i=\lfloor n t_0 \rfloor}^{\lfloor n(t+h)\rfloor -1}
    \big(K(\tfrac{\lfloor n(t+h)\rfloor}n, \sn i) - K(\tfrac{\lfloor
      nt\rfloor}n, \sn i)\big)^2
    \E_{j}\Big[\Big(\frac{\xi^{(n)}_{i+1}}{\sqrt n}\Big)^2\Big]%
    \notag \\%
    &\leq \frac{M^2}n\sup_{t\in(t_0,T-h)} \sum_{i=\lfloor n t_0
      \rfloor}^{\lfloor n(t+h)\rfloor -1} \big(K(\tfrac{\lfloor
      n(t+h)\rfloor}n, \sn i) - K(\tfrac{\lfloor nt\rfloor}n, \sn
    i)\big)^2 \notag
    \intertext{For each $n$, let $t_n$ be an argsup. For any
      subsequence of $t_n$ converging to a limit point $\bar t$, we
      have convergence as a Riemann sum:}%
    &\rightarrow M^2 %
    \int_{t_0}^{\bar t +h}\big(K(\bar t+h, s) - K(\bar t,s)\big)^2\d s
  \end{align}
which by square integrability tends to $0$ as $h$ does.
\end{proof}

For the discrete-time markets, we shall restrict ourselves to the
following set of strategies:

\begin{defn}
  Let $n<\infty$ be given%
  . For any natural $\period$, a ``$\period$-period strategy''
  (``single period strategy'' if $\period=1$), consists of waiting
  until some stopping time $t_*=\sn {j_*}\geq \sn{j_0}$, buying a
  $j_*$-measurable number $\units>0$ of units, holding these until a
  stopping time $t^*=\sn{j^*}$ where
  $j^*\in\{j_*+1,\dots,j_*+\period\}$ and then selling all $\units$
  units.
  \\
  The ``net return'' from this transaction is
\begin{align}
  R=R_{j_*,j^*}:=\units\cdot(S^{(n)}(t^*)-
  S^{(n)}(t_*))-(\Lambda_*+\Lambda^*)\cdot\lambda
\end{align}
where $\lambda\Lambda_*$ and $\lambda\Lambda^*$ are the respective
transaction costs for buying and selling, allowed to depend on prices
and units like in Assumption~\ref{assume_transcost}
below. \hfill\slutt\end{defn}

The reason for the ``$\lambda$'' parameter is that we will consider
the properties for small transaction costs, and it will be convenient
to scale by a number. The main results will be carried out under for
fixed transaction costs and $\units=1$, and Proposition~\ref{simplevsfull}
will show that this is sufficiently general. For the time being,
assume the more parsimonious form for $\Lambda_*$,
$\Lambda^*$:

\begin{assume}\label{assume_transcost}
  $\Lambda_*=\Lambda_*(\units,S^{(n)}(t_*))$ and
  $\Lambda^*=\Lambda^*(\units,S^{(n)}(t_*), S^{(n)}(t^*))$ will be
  nonnegative functions, bounded in $(S^{(n)}(t_*), S^{(n)}(t^*)$,
  while $\lambda$ will be a number $\geq0$.\hfill\slutt
\end{assume}
\begin{defn}
  We shall use the term ``transaction cost $\lambda$'' to imply that
  $\units=1$ and $\Lambda_*+\Lambda^*=1$ (identically), and we shall
  refer to ``the simple model \eqref{profit}'' the single period case
  of transaction cost $\lambda$ where $G$ is the identity.\hfill\slutt
\end{defn}

\newcommand{\xxx}{x} \newcommand{\yyy}{y} \newcommand{\zzz}{z} %
This ``simple model'' will be the main focus. For this case, the net
return on the event $\{j_*<\infty\}$ will be
\begin{subequations}\label{profit}
\begin{align}
  S^{(n)}(t_0+\tfrac {j_*+1-\lfloor n t_0\rfloor}n)
  -{}&{}S^{(n)}(t_0+\tfrac {j_*-\lfloor n t_0\rfloor}n)-\lambda = %
  \xxx_{j_*}^{(n)}+\yyy_{j_*}^{(n)}+\zzz_{j_*+1}^{(n)}-\lambda\label{profit_a}{,\qquad\text{}}
\end{align}
{where we introduce the notation}
\begin{align}
  \xxx_{j}=\xxx_{j}^{(n)}&=\frac1n \Big\{a(\sn {j})+ \JODD(\sn
  {j+1})-\JODD(\sn j)\Big\}%
  \\
  \yyy_{j}= \yyy_{j}^{(n)} &= \frac1 {\sqrt n}
  \sum_{i=j_0}^{j-1}\Big[K(\sn{j+1},\sn i) - K(\sn j,\sn i)\Big]\:\xi^{(n)}_{i+1} \label{yyy}\\
  \zzz_{j+1}= \zzz_{j+1}^{(n)}&= \frac1 {\sqrt n} \: K(\sn{j+1},\sn j)\:\xi^{(n)}_{j+1}  \label{zzz}
\end{align}
\end{subequations}
adopting the convention that the empty sum, corresponding to $j=j_0$
($=\lfloor nt_0\rfloor$), is zero.  Notice that the $\zzz_{j+1}^{(n)}$
term will represent the \emph{innovation\/} from step $j$ to $j+1$,
and has subscript ``$j+1$'' since it is only $j+1$-measurable. The
$\xxx_j$ and $\yyy_j$, which correspond to the memory of the process
as well as the drift $a$, are $j$-measurable. An arbitrage will occur
in the market if the memory contribution dominates even in the
worst-case innovation. The next section will make this more precise.

\section{Free lunches: sufficient conditions}
\label{sec:Arbitrage}

Starting out with the definitions from the previous section, we now
define arbitrage and free lunches with vanishing risk under our
admissibility conditions. Informally, we have a FLVR if we can obtain
an arbitrarily small downside to mean return ratio, and an arbitrage
if one can have positive-mean return without downside.  The arguably
most natural, and also the strictest, concept of ``downside'' is the
worst-case outcome, the essential supremum of the negative part, and
we shall restrict ourselves to such a definition. The following
definition appears notationally a bit cryptic, but will under
$\period$-period strategies coincide with the conventional definition
of FLVR and arbitrage; informally, it says that downside should be
arbitrarily small compared to expected return (and we note that the
expectation term is finite, by boundedness of the $\xi_i$).  Note
however that this diverges on the outset -- though not in substance,
as we shall see below -- from a commonplace assumption of fixed
horizon. This fixed horizon is less natural here, where the strategies
involve waiting first and only then is there a bound on the holding
period:

\begin{defn}
  Fix $n{, \period \text{ both } <\infty}$. Consider the condition
  \begin{align}
\frac{\essinf^{(\Pr_{j_*})}\; [R_{j_*,j^*}\big|D_{j_*}]}{\op
  E[R_{j_*,j^*}\big|D_{j}]}> -\delta 
\qquad\text{and}\qquad \op E[R_{j_*,j^*}\big|D_{j_*}]\in(0,\infty)
\label{ratio*}
\end{align}
\begin{itemize}
\item The market is said to admit \emph{free lunch with vanishing
    risk\/} (``FLVR'') \emph{in $\period$ periods\/} if for every
  $\delta>0$ there exist stopping times $j_*{\geq j_0}$ and
  $j^*\in\{j_*+1,\dots,j_*+\period\}$ and a $j_*$-measurable event
  $D_{j_*}$ with $\Pr_{j_0}[D_{j_*}]>0$, such that \eqref{ratio*}
  holds.
\item The FLVR is called an \emph{arbitrage\/} if the FLVR definition
  holds also for $\delta=0$.
\item The simple model \eqref{profit} will be said to admit FLVR
  resp.\ arbitrage, if the respective definition applies with
  $\period=1$ (i.e.\ $j^*=j_*+1$).{}\ {}\hfill\slutt
\end{itemize}
Whenever necessary to distinguish the ex ante (at $j_0$) random
variable which is either 0 (if $D_j$ does not occur) or $>0$ on one
hand, from the $D_j$-conditional positive return on the other --
colloquially speaking, the lottery ticket that yields $\eqslantgtr0$
from the actual positive lunch prize -- we shall use terms like the
``event'' that the lunch ``manifests itself''.
\end{defn}

Obviously, the lack of time bound makes no difference for an
arbitrage; if there is an arbitrage according to this definition, then
for some fixed $Q$, there is an arbitrage which is closed out within
$Q$ steps, i.e., $j^*\leq Q$. Conversely, it does not matter that
$\period$ is assumed deterministic; had we employed the same
definition except with $\period$ being merely measureable and finite,
we would have had an arbitrage for some deterministic $\period$ as
well.  The FLVR definition, on the other hand, might require an
unbounded $j_*$. Informally, a FLVR is a sequence of lunches with
uniformly positive mean, but where the risk tends to zero. This means
that for any nonzero downside you choose as tolerance, then there is a
fixed $Q$ such that you have a lunch within your risk tolerance within
$Q$ steps. Letting downside tend to zero, then our setup allows $Q$ to
grow, as long as $\period$ obeys a fixed bound; compare this to the
usual Black--Scholes setup, which rules out the strategy of trading
and waiting for the unbounded stopping time until your position has
made a given profit.

For the purpose of giving \emph{sufficient\/} conditions for
arbitrage/FLVR under \emph{small\/} transaction cost -- which is the
main object of this section -- the simple model \eqref{profit}, for
which $\period=1$, turns out fairly close to general:

\begin{prop}[Free lunches in the simple model \eqref{profit} vs.\ in
  the full model]\label{simplevsfull}
  Fix $\units>0,\ n<\infty$. Assume that $\Lambda^*$ of at most linear
  growth wrt.\ the last variable (the selling price). Then there is an
  arbitrage for sufficiently small $\lambda$, provided that so is the
  case in the simple model \eqref{profit}. If $G$ is convex, then
  there is FLVR for sufficiently small $\lambda$, provided that so is
  the case in the simple model \eqref{profit}.
\end{prop}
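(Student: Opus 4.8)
The plan is to reduce the full model to the simple model in two stages: first strip off the nonlinearity $G$ by comparing increments of $G(A+Z)$ with increments of $A+Z$, and second absorb the transaction-cost terms $\Lambda_*,\Lambda^*$ and the multiplier $\units$ into the $\lambda$-parameter. Throughout I work on the $j_*$-measurable event $D_{j_*}$ where the simple-model lunch manifests itself, and I keep the \emph{same} stopping times $j_*$ and $j^*=j_*+1$ as in the simple model. The key quantities are the simple-model net return $\Delta:=x^{(n)}_{j_*}+y^{(n)}_{j_*}+z^{(n)}_{j_*+1}$ (the increment of $A^{(n)}+Z^{(n)}$) and the full-model gross return $G(A^{(n)}(t^*)+Z^{(n)}(t^*))-G(A^{(n)}(t_*)+Z^{(n)}(t_*))$; since $G$ is continuous and strictly increasing, these have the same sign, and on $D_{j_*}$ we are in a regime where $\Delta\eqslantgtr 0$ (arbitrage) or $\essinf_{\xi_{j_*+1}}\Delta/\E_{j_*}[\Delta]>-\delta$ with $\E_{j_*}[\Delta]>0$ (FLVR).

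For the arbitrage case I argue as follows. Because $G$ is strictly increasing, $G(A^{(n)}(t^*)+Z^{(n)}(t^*))\geq G(A^{(n)}(t_*)+Z^{(n)}(t_*))$ holds whenever $\Delta\geq 0$, with strict inequality whenever $\Delta>0$; hence $\units\cdot[S^{(n)}(t^*)-S^{(n)}(t_*)]\eqslantgtr 0$ on $D_{j_*}$. The innovation $z^{(n)}_{j_*+1}$ takes values on both sides of $0$ with positive conditional probability (by the bounded-support-with-interior-$0$ hypothesis built into the simple-model arbitrage), and the key structural fact — emphasised in the paper's own discussion — is that in the arbitrage regime $x^{(n)}_{j_*}+y^{(n)}_{j_*}$ dominates the worst-case innovation by a $j_*$-measurable amount bounded away from $0$ on $D_{j_*}$. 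Transferring this through $G$, the gross return $\units\cdot[S^{(n)}(t^*)-S^{(n)}(t_*)]$ is bounded below on $D_{j_*}$ by a strictly positive $j_*$-measurable random variable $c$. The transaction cost is $\lambda\cdot(\Lambda_*+\Lambda^*)$, and by Assumption~\ref{assume_transcost} together with the at-most-linear growth of $\Lambda^*$ in the selling price, $\Lambda_*+\Lambda^*$ is bounded on $D_{j_*}$ by a $j_*$-measurable random variable $\Lambda_{\max}$ (linear growth in $S^{(n)}(t^*)$ is fine because that price lies in a $j_*$-measurable bounded set, the support of the one-step innovation shifting the $j_*$-measurable $S^{(n)}(t_*)$). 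Choosing $\lambda<c/\Lambda_{\max}$ — which may have to be done $D_{j_*}$-wise, but since we only need \emph{some} small $\lambda$ and may shrink $D_{j_*}$ to a subevent of still-positive probability on which $c/\Lambda_{\max}$ is bounded below, this is legitimate — gives $R_{j_*,j^*}\eqslantgtr 0$, i.e.\ an arbitrage for sufficiently small $\lambda$.

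For the FLVR case I additionally invoke convexity of $G$. Write $P:=A^{(n)}(t_*)+Z^{(n)}(t_*)$ (which is $j_*$-measurable) and note $S^{(n)}(t^*)-S^{(n)}(t_*)=G(P+\Delta)-G(P)$. Convexity gives $G(P+\Delta)-G(P)\geq G'_+(P)\cdot\Delta$ on the downside and a matching one-sided control on the upside over the bounded one-step range; the upshot is that the ratio $\essinf_{\xi_{j_*+1}}[G(P+\Delta)-G(P)]/\E_{j_*}[G(P+\Delta)-G(P)]$ is controlled by the corresponding ratio for $\Delta$, up to a multiplicative constant depending only on $j_*$-measurable bounded data, because convexity makes the negative part contract relative to the mean under the increasing map $G$. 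Hence for the given $\delta$ we first pick, in the simple model, a lunch with ratio bound $-\delta'$ for a suitably smaller $\delta'$; the transaction-cost term is handled exactly as above, noting $\lambda(\Lambda_*+\Lambda^*)/\E_{j_*}[R]$ can be made $<\delta/2$ by taking $\lambda$ small since $\E_{j_*}[\Delta]$, hence $\E_{j_*}[G(P+\Delta)-G(P)]$, is bounded below on the (possibly shrunk) event $D_{j_*}$. Summing the contributions keeps the ratio above $-\delta$ while preserving positive conditional mean, which is the FLVR condition \eqref{ratio*}.

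The main obstacle is the interaction between the transaction-cost bound and the FLVR's vanishing-risk requirement: unlike the arbitrage case, where we merely need the \emph{sign} preserved, here $\lambda(\Lambda_*+\Lambda^*)$ must be small \emph{relative to the conditional mean return}, and the conditional mean may itself be small along the FLVR sequence. The fix is to exploit that in the FLVR definition $\delta$ is fixed first and only then do we choose the stopping times and event; so for each fixed $\lambda$ we may demand, within the simple model, a lunch whose conditional mean is not merely positive but bounded below by a constant times $\lambda$ (permissible because the simple-model FLVR furnishes lunches of arbitrarily small risk-to-mean ratio, and one can always pass to the subevent where the $j_*$-measurable mean exceeds any prescribed threshold while keeping $\Pr_{j_0}[D_{j_*}]>0$), after which the transaction-cost term contributes a controllable fraction of the mean. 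Making this bookkeeping precise — in particular verifying that shrinking $D_{j_*}$ to enforce the lower bound on the mean does not destroy the ratio bound inherited from the simple model — is the one genuinely delicate point; everything else is the routine sign/monotonicity/convexity transfer sketched above.
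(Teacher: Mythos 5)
Your proof is correct and follows essentially the same route as the paper's: monotonicity of $G$ converts the simple-model gap $\zeta^*\eqslantgtr\zeta_*+c$ into a strictly positive $j_*$-measurable lower bound on the gross return, the linear-growth hypothesis together with restricting $D_{j_*}$ bounds $\Lambda_*+\Lambda^*$ so that a small enough $\lambda$ is absorbed, and convexity of $G$ (the supporting-line inequality at the $j_*$-measurable buying price, which the paper compresses into one sentence about Jensen improving both upside and downside) controls the FLVR ratio. The one wobble is your parenthetical claim that one can always shrink $D_{j_*}$ so that the conditional mean exceeds \emph{any} prescribed threshold --- false in general --- but the bound you actually need, namely $\E_{j_*}[\Delta]$ exceeding a constant times $\lambda$, already follows from the simple model admitting its lunch at a strictly positive transaction cost, so the argument stands.
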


\begin{proof}
  The proof is less interesting, and is relegated to the
  Appendix. Notice that if $\xi_jK(\sn {j+1},\sn j)$ is upper bounded
  (for every $j$ and $n$), then the at most linear growth condition
  will hold (since an arbitrage must be closed out in a bounded number
  of periods).
\end{proof}

Informally, an arbitrage in the simple model \eqref{profit}, occurs if
at some {bounded} ${j_*\in [j_0,\infty)}$, given the information
available then, the {transaction costs plus the worst-case possible
}downside from the innovation $\zzz^{(n)}_{{j_*}+1}$ will be more than
fully compensated by the contribution from the dependence of the past
(i.e. $\xxx^{(n)}_{{j_*}+1}+\yyy^{(n)}_{{j_*}+1}$); a FLVR occurs if
it is ``sufficiently more than compensated in mean'' and ``nearly
fully compensated in worst-case''. The following result is key for the
arbitrage case:

\begin{prop}[Sufficient conditions for arbitrage in the simple model
  \eqref{profit}] Fix $n<\infty$.  If for some natural $j\geq j_0$ we
  have
  \begin{align}
    \esssup_{\{\xi_i\}_{i=j_0+1,\dots,j}}\{\xxx_{j}+\yyy_{j}\}
    +\essinf_{\xi_{j+1}}\zzz_{j+1}%
    \geq\bar\lambda\geq0\label{x+y+z>lambda}
\end{align}
we have an arbitrage for all transaction costs
$\lambda\in[0,\bar\lambda)$ by choosing $j_*=$ this $j$. Furthermore,
we have arbitrage for transaction cost $\bar\lambda$ if in addition
there is a point probability that $\zzz_{j+1}$ attains its
$\esssup{}$.
\end{prop}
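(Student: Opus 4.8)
The plan is to produce, for the stated $j$ and each admissible transaction cost, one deterministic trade $j_*=j$, $j^*=j+1$ and a $j$-measurable event $D_j$ on which the net return $R:=R_{j,j+1}=\xxx_j^{(n)}+\yyy_j^{(n)}+\zzz_{j+1}^{(n)}-\lambda$ of the simple model~\eqref{profit} has no downside and strictly positive conditional mean. Two structural facts drive everything: $\xxx_j+\yyy_j$ is $j$-measurable --- a deterministic number plus a linear combination of $\xi_{j_0+1},\dots,\xi_j$ --- whereas by~\eqref{zzz} the innovation $\zzz_{j+1}=n^{-1/2}K(\sn{j+1},\sn j)\,\xi_{j+1}$ involves only $\xi_{j+1}$ and is therefore independent of $\xxx_j+\yyy_j$ and of the information at step $j$; and all three terms are bounded (local boundedness of $a$, finiteness of $K$ at the finitely many relevant arguments, boundedness of the $\xi$'s), so the expectations below are finite, as~\eqref{ratio*} requires.

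For $\lambda\in[0,\bar\lambda)$ I would put $\varepsilon:=\tfrac12(\bar\lambda-\lambda)>0$ and
\[
  D_j:=\Bigl\{\,\xxx_j+\yyy_j\;>\;\esssup_{\{\xi_i\}_{i=j_0+1,\dots,j}}\{\xxx_j+\yyy_j\}-\varepsilon\,\Bigr\}.
\]
By definition of the essential supremum $D_j$ has positive probability; it is $j$-measurable and depends only on $\xi_{j_0+1},\dots,\xi_j$, which are independent of the information at step $j_0$, so $\Pr_{j_0}[D_j]>0$. Conditionally on the information at step $j$ the value of $\xxx_j+\yyy_j$ is fixed and the only randomness in $R$ is $\zzz_{j+1}$, so on $D_j$
\begin{align*}
  \essinf^{(\Pr_j)}[R\mid D_j]
  &=(\xxx_j+\yyy_j)+\essinf_{\xi_{j+1}}\zzz_{j+1}-\lambda\\
  &>\esssup\{\xxx_j+\yyy_j\}+\essinf_{\xi_{j+1}}\zzz_{j+1}-\varepsilon-\lambda
   \;\geq\;\bar\lambda-\varepsilon-\lambda\;=\;\varepsilon\;>\;0,
\end{align*}
using hypothesis~\eqref{x+y+z>lambda} in the penultimate step. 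Independence of $\zzz_{j+1}$ from $D_j$ gives, by the same estimates, $\E[R\mid D_j]=\E[\xxx_j+\yyy_j\mid D_j]+\E[\zzz_{j+1}]-\lambda\geq\esssup\{\xxx_j+\yyy_j\}-\varepsilon+\essinf_{\xi_{j+1}}\zzz_{j+1}-\lambda\geq\varepsilon>0$, and it is finite. So~\eqref{ratio*} holds with $\delta=0$, and with the same $j,D_j$ for every $\delta>0$: the claimed arbitrage.

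For the boundary cost $\lambda=\bar\lambda$ the $\varepsilon$-enlargement only gives $\essinf^{(\Pr_j)}[R\mid D_j]>-\varepsilon$, so one must pass to an exact level set. If~\eqref{x+y+z>lambda} holds strictly I would simply rerun the previous paragraph with $\varepsilon:=\tfrac12\bigl(\esssup\{\xxx_j+\yyy_j\}+\essinf_{\xi_{j+1}}\zzz_{j+1}-\bar\lambda\bigr)>0$. If equality holds, I would take $D_j:=\{\xxx_j+\yyy_j=\esssup_{\{\xi_i\}}\{\xxx_j+\yyy_j\}\}$; here one needs the essential supremum of $\xxx_j+\yyy_j$ to be attained with positive probability (equivalently, each $\xi_i$, $i=j_0+1,\dots,j$, hits the sign-appropriate endpoint of its support with positive probability) so that $\Pr_{j_0}[D_j]>0$, and then on $D_j$ one gets $\essinf^{(\Pr_j)}[R\mid D_j]=\esssup\{\xxx_j+\yyy_j\}+\essinf_{\xi_{j+1}}\zzz_{j+1}-\bar\lambda=0$, i.e.\ $R\geq0$ a.s.\ on $D_j$, while $\E[R\mid D_j]=\E[\zzz_{j+1}]-\essinf_{\xi_{j+1}}\zzz_{j+1}$, which is $>0$ once $\zzz_{j+1}$ is non-degenerate with positive mass above $\essinf_{\xi_{j+1}}\zzz_{j+1}$ --- the content of the stated additional hypothesis (a point mass of $\zzz_{j+1}$ at its essential supremum). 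I expect this boundary case to be the only genuine obstacle: everything else is routine manipulation of essential suprema and infima of finite sums of independent bounded variables, whereas here the move from an $\varepsilon$-neighbourhood of the supremum to the supremum itself forces the attainability assumptions, and one must keep the two roles straight --- attainability of $\esssup\{\xxx_j+\yyy_j\}$ to make $D_j$ non-null, and non-degeneracy of $\zzz_{j+1}$ to keep the conditional mean strictly positive.
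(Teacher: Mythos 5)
Your proof is correct and follows essentially the same route as the paper's: the same $\varepsilon$-neighbourhood event $D_j$ of $\esssup\{x_j+y_j\}$, and the same use of the independence and $j+1$-measurability of $z_{j+1}$ to bound both the conditional worst case and the conditional mean below by $\bar\lambda-\lambda-\varepsilon>0$, so that both numerator and denominator of the ratio condition are positive. Your treatment of the boundary cost $\bar\lambda$ is if anything more explicit than the paper's one-line remark (``positive probability at $\epsilon=0$''), and you correctly separate the two roles --- attainability of $\esssup\{x_j+y_j\}$ to make $D_j$ non-null (which is what the paper's proof actually invokes) versus non-degeneracy of $z_{j+1}$ to keep the conditional mean strictly positive.
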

\begin{proof}
  Suppose \eqref{x+y+z>lambda} holds for some $\bar\lambda$. Let $D_j$
  be the $j$-measurable event of attaining
$$\xxx_{j}+\yyy_{j}\geq
\esssup_{\{\xi_i\}_{i=j_0+1,\dots,j}}\{\xxx_{j}+\yyy_{j}\}-\epsilon.$$
Then $\Pr_{j_0}[D_j]>0$ for each $\epsilon>0$; let
$\epsilon\in(0,\bar\lambda-\lambda)$ if nonempty. Should the event
$D_j$ occur at step $j$, then \eqref{profit_a} is $>0$ and so is the
$\essinf$ of \eqref{ratio*}, where then both numerator and denominator
become positive. Arbitrage also for transaction cost $\bar\lambda$
holds if we have positive probability at $\epsilon=0$.
\end{proof}

So the discrete market will admit an arbitrage if there may occur a
period so good that the contribution from this beneficial history,
knocks out the innovation so much that the worst-case scenario is a
profit. Evidently, this will not happen if $K$ is a constant (i.e.\
ordinary Brownian motion), for then the history does not matter; on
the other hand, if $K$ is increasing in its first variable, then it is
near-trivial to construct arbitrage examples by letting downside be
bounded and the upside be unbounded.  While one can certainly imagine
a some modeler trying to use e.g.\ a shifted lognormal in order to
model limited liability investments, a choice of a symmetric
distribution for the $\xi_i$ would arguably be more natural and
innocuous-looking. But for a suitably wide range of models, a larger
downside than upside will not even prevent an arbitrage, as we shall
soon see. For the book-keeping of ``good'' and ``bad'' $\xi_i$
outcomes, we shall denote their essential suprema/infima as follows:
\begin{subequations}
\begin{align}
  \label{eq:Mmgb}
  M_i=M_i^{(n)}&=\esssup \xi_i^{(n)}\\
  m_i=m_i^{(n)}&=-\essinf \xi_i^{(n)},%
\end{align}
\end{subequations}

{Standing at step $j$, then the worst that can
  happen in the next innovation, cf.\ \eqref{zzz}, is denoted
  $\beta_j$ (``beta'' for ``bad''):} %
\begin{subequations}
\begin{align}
  \beta_j=\beta_j^{(n)}&=
\begin{cases}
  -m_j&\quad\text{if }K(\sn{j+1},\sn j)\geq0\\
  M_j&\quad\text{otherwise.}
\end{cases}
\intertext{Looking back in time, we define $\gamma_{ij}$ (``gamma''
  for ``good'') to be the best possible history over $i=j_0,\cdots,j$
  (cf.\ \eqref{yyy}):} \gamma_{ij}=\gamma_{ij}^{(n)}&=
\begin{cases}
  M_i&\quad\text{if }K(\sn{j+1},\sn i)\geq K(\sn j,\sn i)\\
  -m_i&\quad\text{otherwise.}
\end{cases}
\end{align}
\end{subequations}
We want to specify this in terms of time, not only steps. Suppose we
are targeting an arbitrage within time $T$ for the discretised
model, choosing $\jT$ to be the second-to-last step before time
$T$, closing out the transaction before time $T$:
\begin{align}
  \label{eq:jj}
  \jT+1=\big\lfloor nT-nt_0+\lfloor nt_0\rfloor\big\rfloor
\end{align}
Then define $\Gamma(T,s)=\Gamma^{(n)}(T,s)$ as a left-continuous step
function with values $\Gamma(T,\sn i)=\gamma_{i\jT}$; extend it to be
$0$ for $s\in[\sn {\jT},T]$. Consider then $n^{-1/2} y_{\jT}$ and bear
in mind that $\jT$ depends on $n$ chosen as \eqref{eq:jj}. Then,
provided that limits exist (again, $K'_1$ denotes the derivative wrt.\
the first variable), we have
\begin{align*}
   \lim_n (\esssup y_{\jT} \sqrt n) \geq 
  \int_{t_0}^{T}K'_1(T,s)\liminf_n\Gamma^{(n)}(T,s)\d s
\end{align*}
where the inequality follows from nonnegativity of the integrand and
the Fatou lemma. So, given $\epsilon>0$, then for all large enough but
finite $n$, there will be positive $\Pr_{j_0}$ measure of the event
\begin{align}\label{integralkriterium}
  D_{j_*}=\Big\{y_{j_*} \geq
  \Big[\int_{t_0}^{T}K'_1(T,s)\Gamma^{(n)}(T,s)\d s-\epsilon\Big]n^{-1/2}\Big\}.
\end{align}
This gives rise to the following result:

\begin{thm}[Sufficient conditions for arbitrage within time $T$ in the
  simple model \eqref{profit}]\label{thm:arbitrage}
  Fix a $T>t_0$ and for each $n$, let $\jT$ be given by
  \eqref{eq:jj}. Assume that at $T$ we have $\JODD$ H\"o{}lder continuous
  with exponent $\alpha>1/2$, and furthermore that $K(t,s)$ is
  differentiable in the first variable, at $t=T$, for each
  $s\in(t_0,T)$.  Then if
  \begin{align}
    \label{eq:3}
    \int_{t_0}^{T} K'_1(T,s)\Gamma^{(n)}(T,s)\d s>\big\lvert K(\sn{\jT+1},\sn{\jT})\beta_j^{(n)}\big\rvert+\bar\epsilon
  \end{align}
  holds for all large enough $n$, some $\bar\epsilon>0$, then for any
  $n$ large enough, there is an arbitrage with sufficiently small
  transaction costs, by waiting until step $j_*$.
\end{thm}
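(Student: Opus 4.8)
The plan is to deduce the claim from the preceding Proposition on sufficient conditions for arbitrage in the simple model \eqref{profit}, applied with the (deterministic) stopping time $j_*$ equal to the $\jT$ of \eqref{eq:jj}. By that Proposition it is enough to produce, for each sufficiently large but \emph{fixed} $n$, some $\bar\lambda>0$ with
\[
  \esssup_{\{\xi_i\}_{i=j_0+1,\dots,\jT}}\{\xxx_{\jT}+\yyy_{\jT}\}+\essinf_{\xi_{\jT+1}}\zzz_{\jT+1}\;\geq\;\bar\lambda ;
\]
the arbitrage is then: wait to step $\jT$ and, on the $\jT$-measurable ``good-history'' event $D_{\jT}$ of \eqref{integralkriterium}, buy one unit and close out at step $\jT+1$. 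Since $\xxx_{\jT}$ depends only on $a$ and $\JODD$ at deterministic times, $\esssup\{\xxx_{\jT}+\yyy_{\jT}\}=\xxx_{\jT}+\esssup\yyy_{\jT}$, so I would multiply the displayed quantity by $\sqrt n$ and bound the three resulting summands separately, showing the total stays bounded below by a positive number as $n\to\infty$.

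For the drift/already-occurred term, $\sqrt n\,\xxx_{\jT}=\tfrac1{\sqrt n}a(\sn{\jT})+\sqrt n\bigl(\JODD(\sn{\jT+1})-\JODD(\sn{\jT})\bigr)$, and since $\sn{\jT},\sn{\jT+1}\to T$ with $\sn{\jT+1}-\sn{\jT}=1/n$, local boundedness of $a$ disposes of the first piece while H\"older continuity of $\JODD$ at $T$ with exponent $\alpha>1/2$ makes the second $O(n^{1/2-\alpha})=o(1)$ --- this is exactly where the hypothesis $\alpha>1/2$ enters. For the innovation term, the definitions of $\zzz$ and $\beta$ give $\sqrt n\,\essinf_{\xi_{\jT+1}}\zzz_{\jT+1}=-\bigl\lvert K(\sn{\jT+1},\sn{\jT})\beta^{(n)}_{\jT}\bigr\rvert$ (the cases defining $\beta$ select the worst sign, and the value is $\leq0$ because the support of $\xi_{\jT+1}$ straddles $0$). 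For the memory term I would lean on the Riemann-sum computation carried out in the paragraph immediately preceding the statement: differentiability of $K(\cdot,s)$ at $T$ turns $n\bigl[K(\sn{\jT+1},\sn i)-K(\sn{\jT},\sn i)\bigr]$ into $K'_1(T,s)$, so $\sqrt n\,\esssup\yyy_{\jT}=\sum_{i}\bigl[K(\sn{\jT+1},\sn i)-K(\sn{\jT},\sn i)\bigr]\gamma_{i\jT}$ is a Riemann sum, and since $K'_1(T,s)$ and $\gamma_{i\jT}$ carry the same sign the integrand $K'_1(T,s)\Gamma^{(n)}(T,s)$ is nonnegative, whence Fatou gives $\sqrt n\,\esssup\yyy_{\jT}\geq\int_{t_0}^{T}K'_1(T,s)\Gamma^{(n)}(T,s)\d s-\epsilon$ for every $\epsilon>0$ once $n$ is large --- equivalently $\Pr_{j_0}[D_{\jT}]>0$.

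Combining the three bounds with the hypothesis \eqref{eq:3}, for all large $n$
\[
  \sqrt n\Bigl(\esssup\{\xxx_{\jT}+\yyy_{\jT}\}+\essinf_{\xi_{\jT+1}}\zzz_{\jT+1}\Bigr)\;\geq\;o(1)-\epsilon+\int_{t_0}^{T}K'_1(T,s)\Gamma^{(n)}(T,s)\d s-\bigl\lvert K(\sn{\jT+1},\sn{\jT})\beta^{(n)}_{\jT}\bigr\rvert\;>\;o(1)-\epsilon+\bar\epsilon .
\]
Taking $\epsilon=\bar\epsilon/3$ and then $n$ large enough that $\lvert\sqrt n\,\xxx_{\jT}\rvert<\bar\epsilon/3$ makes the left side exceed $\bar\epsilon/3>0$; fixing such an $n$ and setting $\bar\lambda:=\bar\epsilon/(3\sqrt n)$, the preceding Proposition delivers an arbitrage for every transaction cost $\lambda\in[0,\bar\lambda)$, i.e.\ for all sufficiently small transaction costs, realised by waiting to step $\jT$. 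The step I expect to be the real obstacle is the one imported from the preamble: making the Riemann-sum lower bound for $\sqrt n\,\esssup\yyy_{\jT}$ rigorous, since one needs $n[K(\sn{\jT+1},\sn i)-K(\sn{\jT},\sn i)]\to K'_1(T,s)$ with enough uniformity in $i$ while $\Gamma^{(n)}$ itself is moving with $n$, which forces the argument through nonnegativity of the integrand and Fatou rather than a naive limit--integral interchange.
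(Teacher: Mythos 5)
Your proposal is correct and follows essentially the same route as the paper: the paper likewise works on the event \eqref{integralkriterium}, uses the H\"o{}lder hypothesis to kill $x_{\jT}\sqrt n$, the Riemann-sum/Fatou bound from the preceding paragraph for $\sqrt n\,\esssup \yyy_{\jT}$, and the worst-case bound $-\lvert K(\sn{\jT+1},\sn{\jT})\beta^{(n)}\rvert$ for the innovation, concluding positivity of the net return from \eqref{eq:3}. Your only deviation is cosmetic --- you route the conclusion explicitly through the earlier Proposition with $\bar\lambda=\bar\epsilon/(3\sqrt n)$, whereas the paper exhibits the positive net return directly.
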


\begin{proof}
  Under the assumptions, we would on the event in
  \eqref{integralkriterium} have a net return of at least
  \begin{align*}
    \bigg[\frac{a(\sn {\jT})+ \JODD(\sn {\jT+1})-\JODD(\sn
      {\jT})}{n^{1/2}}&+\int_{t_0}^{T}K'_1(T,s)\Gamma^{(n)}(T,s)\d s
    \\&-\lvert
    K(\sn{\jT+1},\sn{\jT})\beta_j^{(n)}\big\rvert-\epsilon\bigg]n^{-1/2}-\lambda
  \end{align*}
  and the H\"o{}lder regularity ensures that the first term inside the
  bracket (i.e.\ $x_j\sqrt n$), will vanish as $n$ grows. Then by
  \eqref{eq:3}, the net return will for large enough $n$ be a positive
  random variable, even with small transaction costs.
\end{proof}

\begin{rem}\label{KTT=0}
  First, observe that if $\lim_{s\nearrow T}K(T,s)=0$, then this would
  lead to arbitrages.  Second, note that Theorem~\ref{thm:arbitrage}
  is stated for fixed $T$, but it is sufficient to look for some $T$
  where it applies.  For example, if $\xi_i$ have symmetric support
  for each $n$, then we can replace \eqref{eq:3} by
  \begin{align}
    \label{eq:4}
    \sup \Big\{\int_{t_0}^{T} \big\lvert{K'_1(T,s)}\big\rvert\d s
    -\big\lvert K(T,T-\frac1n)\big\rvert\Big\}>0
  \end{align}
  where the $\sup$ is taken over those $T>t_0$ for which $n(T-t_0)$ is
  integer.  Now one can look for arbitrages by letting $T$ grow.\slutt
\end{rem}

Theorem \ref{thm:arbitrage} also applies to semimartingales. The
corollary is stated only for the natural choice of symmetric
innovations:

\begin{cor}\label{semimg}
  There are infinite-variation semimartingales $Z$, {equalling weak
    limits of their discretisations} $Z^{(n)}$ formed by i.i.d.\
  bounded symmetric $\xi_i$, for which Theorem
  \ref{thm:arbitrage} applies.
\end{cor}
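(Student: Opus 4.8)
The plan is to exhibit a concrete kernel of the ``semimartingale modification of fractional Brownian motion'' type of \citet{MR1434408} and to check that a single choice simultaneously obeys the structural assumptions \eqref{Ka}--\eqref{Kb}, the weak--convergence Theorem~\ref{weakconv}, and the arbitrage Theorem~\ref{thm:arbitrage}. Fix $H\in(\tfrac12,1)$ and let $\kappa$ vanish for $u\le0$, coincide with $u^{H-1/2}$ for $u\ge1$, and on $(0,1]$ be any $C^{1}$ function matching value and derivative at $u=1$ and having a finite \emph{nonzero} limit $c:=\kappa(0+)$; set $K(t,s)=\kappa\bigl((t-s)_+\bigr)-\kappa\bigl((-s)_+\bigr)$, so that $K(t,s)=\kappa(t-s)$ for $t>s>t_0\ge0$ -- the representation form singled out in Section~\ref{sec:Market}. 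First I would record the routine checks: $\kappa\in C^{1}((0,\infty))$ with $\kappa'$ bounded near $0$ and $\kappa'(u)\sim(H-\tfrac12)u^{H-3/2}$ at infinity, so $\kappa'\in L^{2}((0,\infty))$; $K$ is deterministic, piecewise continuous, and vanishes for $s\ge t$; and \eqref{Kb} holds because $K(t,s)^{2}$ is bounded near $s=0$, continuous on $(0,t)$, and $O\bigl((-s)^{2H-3}\bigr)$ as $s\to-\infty$ with $2H-3<-1$.

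Next I would show $Z$ is an infinite--variation semimartingale. Since $c=\kappa(0+)$ is finite and $\kappa$ is absolutely continuous on $(0,\infty)$ with $\kappa'\in L^{2}$, substituting $\kappa(t-s)=c+\int_0^{t-s}\kappa'$ and interchanging the order of integration yields, for $t\ge t_0$,
\[
  Z(t)-Z(t_0)=c\,\bigl(W(t)-W(t_0)\bigr)+\int_{t_0}^{t}V(u)\,\d u,\qquad
  V(u)=\int_{-\infty}^{u}\kappa'(u-s)\,\d W(s),
\]
where $V$ is a stationary Gaussian process with $\E[V(u)^{2}]=\|\kappa'\|_{L^{2}}^{2}<\infty$, so the drift has a.s.\ locally integrable paths; this is precisely Rogers' construction, and one may just cite \citet{MR1434408} for the semimartingale claim. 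Its continuous martingale part is $c\,W$ with $c\neq0$, which alone has infinite variation on every interval, so $Z$ does too on each $[t_0,T]$.

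Finally I would take the $\xi^{(n)}_{i}$ i.i.d., symmetric and bounded with $\E[\xi_{i}]=0$, $\E[\xi_{i}^{2}]=1$ (e.g.\ $\pm1$): Theorem~\ref{weakconv} then gives $Z^{(n)}\Rightarrow Z$ on $D([t_0,T])$ for every $T>t_0$, so $Z$ is the weak limit of its discretisations. To invoke Theorem~\ref{thm:arbitrage}, note that $\JODD(t)=\int_{-\infty}^{t_0}K(t,s)\,\d W(s)$ is Gaussian in $t$ with $\E[(\JODD(t+h)-\JODD(t))^{2}]\le C h^{2}$ (there $t-s$ stays bounded away from $0$ and $\kappa'\in L^{2}$), hence a.s.\ H\"older of every exponent $<1$, in particular $>\tfrac12$, while $K(t,s)=\kappa(t-s)$ is $C^{1}$ in $t$ at $t=T$ for each $s\in(t_0,T)$. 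With symmetric innovations, Remark~\ref{KTT=0} replaces \eqref{eq:3} by \eqref{eq:4}, so it suffices that for all large $n$ there be a $T=T_{n}$ with $n(T_{n}-t_0)$ integer and $\int_{t_0}^{T_{n}}\lvert K'_1(T_{n},s)\rvert\,\d s-\lvert K(T_{n},T_{n}-\tfrac1n)\rvert$ bounded below by a fixed $\bar\epsilon>0$; taking $T_{n}=t_0+n$ this quantity equals $\int_0^{n}\lvert\kappa'(u)\rvert\,\d u-\lvert\kappa(1/n)\rvert\ge n^{H-1/2}-1-\lvert\kappa(1/n)\rvert\to+\infty$, while $\lvert\kappa(1/n)\rvert\to\lvert c\rvert$ stays bounded. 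Hence Theorem~\ref{thm:arbitrage} delivers an arbitrage in the $n$-th market, robust to sufficiently small transaction costs, for every large $n$.

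The part that needs care is compatibility, not any single estimate. The modification of $\kappa$ near the origin is exactly what buys the semimartingale property -- a finite nonzero $\kappa(0+)$, so that the one-step innovation has bounded worst case -- and the obvious worry is that it also kills the arbitrage; it does not, because the discrete-time arbitrage rests only on $\int_0^{\infty}\lvert\kappa'\rvert>\lvert\kappa(0+)\rvert$, furnished by the \emph{unbounded} growth $\kappa(u)\sim u^{H-1/2}$ at infinity -- the long memory inherited untouched from fractional Brownian motion -- so the two requirements act on disjoint parts of the kernel. The remaining technicalities are routine: justifying the stochastic--Fubini interchange (from the uniform $L^{2}$ bound on $V$), and noting that the discrete sign pattern defining $\Gamma^{(n)}$ agrees with $\sign\kappa'$ off the Lebesgue--null zero set of $\kappa'$, which is all one needs to identify $\int_{t_0}^{T}K'_1(T,s)\Gamma^{(n)}(T,s)\,\d s$ with a positive multiple of $\int_0^{T-t_0}\lvert\kappa'\rvert$ in the limit.
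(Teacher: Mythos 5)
Your proof is correct, and it rests on the same two pillars as the paper's own argument: a convolution kernel $K(t,s)=\ELL(t-s)$ that is absolutely continuous with $\ELL'\in L^2((0,\infty))$ (the \citet{MR2024843}/\citet{MR1434408} semimartingale criterion), a nonzero finite $\ELL(0^+)$ to force infinite variation via the martingale part $\ELL(0^+)W$, and the comparison ``total variation of $\ELL$ versus $|\ELL(1/n)|\to|\ELL(0^+)|$'' to feed Theorem~\ref{thm:arbitrage}. Where you differ is in the witness and in the amount you verify. The paper takes a \emph{bounded} hump-shaped $\ELL\geq0$ with a global maximum at a fixed $T$ satisfying $\ELL(T)>2\ELL(0^+)>0$, so that \eqref{eq:3} holds at that single $T$ with no appeal to growth at infinity; you instead take the unbounded Rogers modification $\ELL(\vartheta)\sim\vartheta^{H-1/2}$ and let $T=T_n$ grow, which forces you to route through Remark~\ref{KTT=0} and \eqref{eq:4} -- note that this detour is avoidable, since for your kernel any fixed $T$ with $(T-t_0)^{H-1/2}>2+2\sup_{(0,1]}|\ELL|$ already works, keeping you squarely inside the fixed-$T$ statement of Theorem~\ref{thm:arbitrage}. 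What your choice buys is that the arbitrage is manifestly driven by the \emph{long-run} memory $\ELL(\infty)=\infty$ rather than by a short-range hump, which is the point the paper makes separately in its examples section; what the paper's choice buys is brevity. You also supply several verifications the paper leaves implicit -- square-integrability \eqref{Kb}, the weak-convergence hypotheses, the H\"older regularity of $\JODD$ via the Gaussian bound $\E[(\JODD(t+h)-\JODD(t))^2]\leq Ch^2$, and the identification of the $\Gamma^{(n)}$ sign pattern with $\sign\ELL'$ -- all of which are sound (for the last point it is cleanest to simply take $\ELL$ monotone on all of $(0,\infty)$, as in Rogers' explicit kernel, so that $\Gamma^{(n)}\equiv M$).
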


\begin{proof}
  Put $t_0=0$ for simplicity. From \citet[Theorem 3.9]{MR2024843}, it
  is sufficient for the semimartingale property that
  $K(t,s)=\ELL(t-s)$ on $t>s>0$ with $\ELL$ being continuous and
  piecewise differentiable with $\ELL'\in\kl L^2((0,\infty))$, and
  under these conditions, total variation is infinite on compacts iff
  $\ELL(0^+)\neq0$. Choose a $\ELL\geq0$ with a global maximum at $T$,
  with $\ELL(T)>2\ELL(0^+)>0$; then it satisfies the hypothesis of
  Theorem~\ref{thm:arbitrage}, and we only need $\ELL(\vartheta)$ to
  be smooth and $\kappa'$ to tend sufficiently fast to $0$ as to be
  square integrable.
\end{proof}

The form where the dependence on $(t,s)$ only appear through the
difference, will cover many cases and simplify calculations. We
introduce the conditions:
\begin{subequations}\label{L+}
  \begin{gather}
    \label{samesupp}\text{For each $n$, we have $i$-independent $m_i=m=m^{(n)}$ and $M_i=M=M^{(n)}$.}
    \\
\label{eq:L}
K(t,s)=\ELL((t-s)_+)\qquad\text{for $s\geq t_0$, with $\ELL$ not constant on
  $(0,\infty)$}
\end{gather}
\end{subequations}
-- the non-constantness ruling out the ordinary Brownian motion.  As
seen above, this form covers a wide class of even semimartingales. We
can then write $\yyy$ and $\zzz$ as
\begin{subequations}
\begin{align}
  \yyy_{j} &= \frac1{\sqrt n}
  \sum_{i=j_0}^{j-1}\Big[\ELL(\tfrac{j+1-i}{n})-\ELL(\tfrac{j-i}{n})\Big]\:\xi^{(n)}_{i+1} \label{yyy_L}\\
  \zzz_{j+1}&= \frac1 {\sqrt n} \: \ELL(\tfrac1n)\:\xi^{(n)}_{j+1}  \label{zzz_L}
\end{align}
\end{subequations}
Now consider the good outcomes $\gamma_{ij}$; if $\ELL$ is monotone or
$m=M$, then the series in~\eqref{yyy_L} will telescope. A non-monotone
$\ELL$ only has more variation, which increases the sum, so the
ess~sup of $\yyy_j$ will therefore be at least
\begin{align}\left\vert\frac{\ELL(\tfrac{j-j_0+1}{n})-\ELL(\tfrac1n)}{\sqrt n}\right\vert\cdot
\begin{cases}
  M^{(n)} &\qquad\text{if }\ELL(\tfrac{j-j_0+1}{n})\geq \ELL(\tfrac1n)\\m^{(n)}&\qquad\text{otherwise.}
\end{cases}
\end{align}
(if we want to utilise the variation of $\ELL$, we could write in
terms as a sum of $|\ELL'|$-terms, tending to the constant times the
total variation of $\ELL$ over the interval $(1/n,(j-j_0+1)/n)$.) We
have the following:

\begin{thm}[Sufficient conditions for arbitrage in the simple model
  under the form \eqref{L+}]\label{thm:L} Suppose that $\JODD$ is
  H\"o{}lder
  continuous with index $\alpha>1/2$ and furthermore that for some
  subsequence $n_\ell$, we have ${m^{(n_\ell)}}$, and ${M^{(n_\ell)}}$
  bounded. Then, each of the following conditions implies arbitrage
  for all large enough $\ell$ -- and furthermore, for each of those
  $n_\ell$, the arbitrage admits small enough transaction costs:
  \begin{enumerate}%
  \item $\liminf_\ell|\ELL(1/n_\ell)|=0$, or $\ELL$ changes
    sign. \label{nullellerfortegnsskift}
  \item The total variation of $\ELL$ over $(0,\infty)$ (i.e.\
    $\int_{0}^\infty \left| \ELL'(\vartheta)\right|\d\vartheta$ if
    $\ELL'$ exists), is $>$ than
    $$\liminf_\ell\Big[\big|\ELL(1/n_\ell)\big|
    \cdot\frac{\max\{M^{(n_\ell)},m^{(n_\ell)}\}}{\min\{M^{(n_\ell)},m^{(n_\ell)}\}}\Big].$$
  \end{enumerate}
\end{thm}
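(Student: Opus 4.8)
The plan is to reduce the statement to the earlier sufficient condition \eqref{x+y+z>lambda} for arbitrage in the simple model \eqref{profit}: it is enough to produce, for each large $\ell$ (after a further extraction from the subsequence $(n_\ell)$ where convenient), a step $j=j(n_\ell)\geq j_0$ and a number $\bar\lambda_\ell>0$ with $\esssup_{\{\xi_i\}_{i=j_0+1,\dots,j}}\{\xxx_j+\yyy_j\}+\essinf_{\xi_{j+1}}\zzz_{j+1}\geq\bar\lambda_\ell\geq0$, since the cited proposition then yields arbitrage for every transaction cost $\lambda\in[0,\bar\lambda_\ell)$. Multiplying through by $\sqrt{n_\ell}$: the drift/already-realised part $\xxx_j\sqrt{n_\ell}=n_\ell^{-1/2}\{a(\sn j)+\JODD(\sn{j+1})-\JODD(\sn j)\}$ tends to $0$ exactly as in the proof of Theorem~\ref{thm:arbitrage} (local boundedness of $a$, H\"older continuity of $\JODD$), provided I keep the times $\sn j$ in a bounded range --- which I will do in every case --- and since $\xxx_j$ is $j$-measurable, $\esssup\{\xxx_j+\yyy_j\}\geq\esssup\yyy_j-\esssup|\xxx_j|$. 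So it suffices to arrange $\liminf_\ell\big(\esssup(\yyy_j\sqrt{n_\ell})+\essinf(\zzz_{j+1}\sqrt{n_\ell})\big)>0$ for a suitable $j=j(n_\ell)$.

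Two elementary estimates will be the building blocks. From \eqref{zzz_L}, $\essinf_{\xi_{j+1}}(\zzz_{j+1}\sqrt n)=\ELL(\tfrac1n)\beta^{(n)}_j$, so $\big|\essinf(\zzz_{j+1}\sqrt n)\big|=\mu^{(n)}|\ELL(\tfrac1n)|$, where $\mu^{(n)}=m^{(n)}$ if $\ELL(\tfrac1n)\geq0$ and $\mu^{(n)}=M^{(n)}$ otherwise. From \eqref{yyy_L}, choosing the (independent) signs of $\xi^{(n)}_{j_0+1},\dots,\xi^{(n)}_j$ optimally gives both the telescoped bound $\esssup(\yyy_j\sqrt n)\geq\big(M^{(n)}\text{ if }\ELL(\tfrac{j-j_0+1}{n})\geq\ELL(\tfrac1n),\ m^{(n)}\text{ otherwise}\big)\cdot\big|\ELL(\tfrac{j-j_0+1}{n})-\ELL(\tfrac1n)\big|$ and the variation bound $\esssup(\yyy_j\sqrt n)\geq\min\{M^{(n)},m^{(n)}\}\sum_{k=1}^{j-j_0}\big|\ELL(\tfrac{k+1}{n})-\ELL(\tfrac kn)\big|$; recall $\min\{M^{(n)},m^{(n)}\}>\nu>0$.

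What then remains is, in each case, to pick the target $\tfrac{j-j_0+1}{n_\ell}$ and verify a cancellation. For the first alternative of condition~\ref{nullellerfortegnsskift} I would extract a subsequence with $\ELL(\tfrac1{n_\ell})\to0$; then $|\essinf(\zzz_{j+1}\sqrt{n_\ell})|\leq\max\{M,m\}\,|\ELL(\tfrac1{n_\ell})|\to0$ by the boundedness hypothesis, and taking a continuity point $b$ of $\ELL$ with $\ELL(b)\neq0$ (one exists since $\ELL$ is non-constant and piecewise continuous) and $\tfrac{j-j_0+1}{n_\ell}\to b$, the telescoped bound gives $\esssup(\yyy_j\sqrt{n_\ell})\geq\nu\big|\ELL(\tfrac{j-j_0+1}{n_\ell})-\ELL(\tfrac1{n_\ell})\big|\to\nu|\ELL(b)|>0$. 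For the second alternative ($\ELL$ changes sign) I would extract a subsequence on which $\ELL(\tfrac1{n_\ell})$ either $\to0$ (reducing to the previous case) or keeps a fixed sign $\sigma$, and choose a continuity point $b$ with $\sign\ELL(b)=-\sigma$: then $\ELL(\tfrac{j-j_0+1}{n_\ell})$ and $\ELL(\tfrac1{n_\ell})$ are eventually of opposite sign, the weight in the telescoped bound is precisely $\mu^{(n_\ell)}$, and $\big|\ELL(\tfrac{j-j_0+1}{n_\ell})-\ELL(\tfrac1{n_\ell})\big|=\big|\ELL(\tfrac{j-j_0+1}{n_\ell})\big|+\big|\ELL(\tfrac1{n_\ell})\big|$, so $\esssup(\yyy_j\sqrt{n_\ell})+\essinf(\zzz_{j+1}\sqrt{n_\ell})\geq\mu^{(n_\ell)}\big|\ELL(\tfrac{j-j_0+1}{n_\ell})\big|\geq\nu\big|\ELL(\tfrac{j-j_0+1}{n_\ell})\big|\to\nu|\ELL(b)|>0$, the $|\ELL(\tfrac1{n_\ell})|$-terms having cancelled. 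For the second (total-variation) condition I would fix $L<\infty$ with $\mathrm{TV}_{(0,L)}(\ELL)$ within $\eta$ of $\mathrm{TV}_{(0,\infty)}(\ELL)$ and take $\tfrac{j-j_0+1}{n_\ell}\to L$, so $\sum_{k=1}^{j-j_0}\big|\ELL(\tfrac{k+1}{n_\ell})-\ELL(\tfrac k{n_\ell})\big|\to\mathrm{TV}_{(0,L)}(\ELL)$ and hence $\esssup(\yyy_j\sqrt{n_\ell})\geq\min\{M,m\}\,(\mathrm{TV}_{(0,\infty)}(\ELL)-\eta)$ eventually; along a subsequence realising the liminf in the hypothesis, $|\ELL(\tfrac1{n_\ell})|\,\tfrac{\max\{M,m\}}{\min\{M,m\}}$ is eventually below $\mathrm{TV}_{(0,\infty)}(\ELL)-2\eta$, i.e.\ $|\essinf(\zzz_{j+1}\sqrt{n_\ell})|<\min\{M,m\}\,(\mathrm{TV}_{(0,\infty)}(\ELL)-2\eta)$, and subtracting leaves a liminf $\geq\nu\eta>0$ once $\eta$ is small. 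In every case $\bar\lambda_\ell$ can be taken to be a fixed fraction of the resulting positive liminf divided by $\sqrt{n_\ell}$, and \eqref{x+y+z>lambda} concludes.

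The step I expect to be the real obstacle is the sign bookkeeping in the sign-changing case: one must land the endpoint $\tfrac{j-j_0+1}{n}$ strictly on the side of $0$ opposite to where $\ELL$ sits near the origin, so that the ``best history'' weight for $\yyy_j$ and the ``worst innovation'' weight $\mu^{(n)}$ for $\zzz_{j+1}$ coincide and the --- possibly large, possibly ill-behaved --- $|\ELL(\tfrac1n)|$-contributions cancel rather than compete; this is also what lets the sign-changing alternative dispense with any control on $|\ELL(\tfrac1n)|$. Everything else is bookkeeping: the repeated passage to sub-subsequences (after which ``for all large $\ell$'' is to be read along the sub-subsequence, which still exhibits arbitrage in the market sequence); keeping $\tfrac{j-j_0+1}{n}$ in a bounded range so $\sn j$ stays bounded and the H\"older argument for $\xxx_j\sqrt n\to0$ applies verbatim; and the double passage to the limit (mesh to $0$, then interval length to $\infty$) behind $\sum_k|\Delta\ELL|\to\mathrm{TV}(\ELL)$, which is why $L$ is frozen first.
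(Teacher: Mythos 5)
Your proposal is correct and follows essentially the same route as the paper: reduce to the one-step sufficient condition \eqref{x+y+z>lambda}, kill $x_j\sqrt n$ by H\"o{}lder regularity, telescope the $\yyy_j$-sum to target a suitable $\vartheta$ (with the sign-change cancellation leaving $\mu^{(n)}|\ELL(\vartheta)|$), and use the partition-sum lower bound by the total variation against the worst-case innovation $\max\{M,m\}|\ELL(1/n)|$ for part (b). You merely make explicit the bookkeeping (subsequence extractions, boundedness of the time range, the appeal to the earlier Proposition) that the paper leaves implicit.
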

\begin{proof}{\ }In all cases, H\"o{}lder regularity ensures that
  $x_j\sqrt n $ will tend to $0$ as $n$ grows. Then:
  \begin{enumerate}%
  \item By \eqref{eq:L}, $\ELL$ takes some nonzero value. Suppose
    first that $\ELL(1/n_\ell)\searrow0$ while
    $\ELL(\vartheta)>0$. Then choosing a sequence of $j$'s so that
    $(j-j_0+1)/n$ approximates $\vartheta$ from the appropriate side
    (recall that \eqref{Ka} assumes only piecewise continuity), we
    obtain
    \label{proofa}
    \begin{align*}
      \sqrt n \:\cdot\Big[z_{j+1}+\esssup y_{j+1}\Big]%
      &\geq M\cdot \Big[\ELL(\tfrac{j-j_0+1}{n})-
      \ELL(\tfrac1n)-\tfrac mM \ELL(\tfrac1n)\Big]%
      \\&\rightarrow M\ELL(\vartheta)>0%
    \end{align*}
    The negative-sign case likewise converges to $m\cdot|\ELL(\vartheta)|$.
    \\
    Now suppose that $\ELL$ changes sign, and by the previous part of
    the proof, we can assume that $\ELL(1/n_\ell)$ is bounded away
    from $0$. Suppose that $\ELL(1/n_\ell)>0>\ELL(\vartheta)$. Then
    choosing $j$ as above, we obtain
    \begin{align*}
      \sqrt n \:\cdot\Big[z_{j+1}+\esssup y_{j+1}\Big]%
      & \geq m\cdot \Big[ \ELL(\tfrac1n)-\ELL(\tfrac{j-j_0+1}{n})-
      \ELL(\tfrac1n)\Big]%
      \\&=-m \ELL(\tfrac{j-j_0+1}{n})
    \end{align*}
    which is positive whenever $n_\ell$ is large enough. The case with
    reversed signs follows likewise.

  \item $\sqrt n\ \esssup \yyy_j$ exceeds $\min\{m^{(n)},M^{(n)}\}\
    \times$ total variation on $(\tfrac1n,\tfrac{j-j_0+1}{n})$, while
    even in the worst case, $\sqrt n\ z_j\geq - |\ELL(1/n)|\cdot
    \max\{m^{(n)},M^{(n)}\}$.
  \end{enumerate}
  \end{proof}
  The H\"o{}lder regularity condition on $J$ in
  Theorems~\ref{thm:arbitrage} and \ref{thm:L} admits ramifications,
  as we need only bound the downside -- it can be replaced by the
  condition that for each $n$ we have $x_j\geq0$ for infinitely many
  $j$, which by symmetry of $W$ occurs in at least half of the cases
  (unconditionally, i.e.\ ``$\Pr_{-\infty}$''). Should $x_j\,\sqrt n$
  blow up as $n$ grows, then it would be expected that
  $\JODD(T,T-1/n)$ oscillates around $0$, and we would be able to
  extract a subsequence where it adds positively to the return.
  Similar considerations would improve upon the next
  Theorem~\ref{thm:FLVR} as well.  Before we state that result, it
  should be noted that the total variation criterion can also be
  improved upon in the setup of Theorem~\ref{thm:arbitrage}, if the
  variation of the step function corresponding to grid size $1/n$,
  diverges as time grows. When $|\ELL(1/n)|\to\infty$, and $\ELL$ is
  monotone (anything else improves total variation) and does not
  change sign (if it does, Theorem~\ref{thm:arbitrage}
  part~\ref{nullellerfortegnsskift} applies), we can still have free
  lunches with vanishing risk:

  \begin{thm}[Sufficient conditions for FLVR under the form
    \eqref{L+}]\label{thm:FLVR}
    Suppose zero transaction cost and that $M^{(n)}=m^{(n)}$
    and \begin{math} \inf_i \E[\xi_i^{(n)}] / m^{(n)}>-1.
    \end{math} Furthermore, assume that $\ELL$ does not change sign,
    and that $\inf_\vartheta |\ELL(\vartheta)|=0<\liminf_n
    |\ELL(1/n)|$ (possibly $=+\infty)$). Then either of the following
    is sufficient for FLVR:
  \begin{enumerate}
  \item For given $n$: \label{FLVRa} $x_j^{(n)}\geq0$ for infinitely
    many $j$, and additionally,
    $\lim_{\vartheta\rightarrow\infty}\ELL(\vartheta)=0$.
  \item \label{FLVRb} $\JODD$ is H\"o{}lder continuous with index
    $>1/2$, \begin{math}
      \liminf_n %
      \frac{\inf_i \E[\xi_i^{(n)}]}{ m^{(n)}}>-1
    \end{math}, and $n$ is large enough.  The FLVR is an arbitrage if the
    positivity of the $x_j$ is uniform (wrt.\ $j$).
  \end{enumerate}
\end{thm}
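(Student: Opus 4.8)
The plan is to reuse the one-period ``wait for a favourable history, buy, and sell next period'' strategy of Theorems~\ref{thm:arbitrage} and~\ref{thm:L}, but now accounting for the fact that the favourable history can no longer completely absorb the worst-case innovation, so that a small uncompensated downside remains; this is why one obtains FLVR rather than (in general) arbitrage, and why the positive-mean-from-drift condition $\inf_i\E[\xi_i^{(n)}]/m^{(n)}>-1$ is needed. Without loss of generality take $\ELL\geq0$. Standing at a step $j$, the innovation $\zzz_{j+1}=n^{-1/2}\ELL(\tfrac1n)\xi_{j+1}$ has worst case $-n^{-1/2}\ELL(\tfrac1n)m^{(n)}$ and conditional mean $\geq n^{-1/2}\ELL(\tfrac1n)\inf_i\E[\xi_i^{(n)}]$ (using $\ELL(\tfrac1n)\geq0$), while, since $M^{(n)}=m^{(n)}$ makes the series in~\eqref{yyy_L} telescope, the bound on $\esssup\yyy_j$ obtained just before Theorem~\ref{thm:L} reads $\esssup\yyy_j\geq n^{-1/2}m^{(n)}\bigl|\ELL(\tfrac{j-j_0+1}n)-\ELL(\tfrac1n)\bigr|$, which equals $n^{-1/2}m^{(n)}\bigl(\ELL(\tfrac1n)-\ELL(\tfrac{j-j_0+1}n)\bigr)$ once $j$ is chosen so that $\ELL(\tfrac{j-j_0+1}n)$ is small against the (eventually, and for each fixed large $n$ strictly) positive $\ELL(\tfrac1n)$.

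Next I would take $D_j=\{\yyy_j\geq\esssup\yyy_j-\epsilon\}$, which has $\Pr_{j_0}[D_j]>0$ for every $\epsilon>0$, and note that $\xxx_j$ is deterministic and $\xi_{j+1}$ independent of the past, so that on $D_j$
\begin{align*}
  \essinf{}^{(\Pr_{j})}\,[R\mid D_j]&\geq \xxx_j-n^{-1/2}m^{(n)}\ELL(\tfrac{j-j_0+1}n)-\epsilon,\\
  \E[R\mid D_j]&\geq \xxx_j+n^{-1/2}\ELL(\tfrac1n)\bigl(m^{(n)}+\inf_i\E[\xi_i^{(n)}]\bigr)-n^{-1/2}m^{(n)}\ELL(\tfrac{j-j_0+1}n)-\epsilon .
\end{align*}
By hypothesis $m^{(n)}+\inf_i\E[\xi_i^{(n)}]>0$ and $\liminf_n|\ELL(\tfrac1n)|>0$, so the conditional mean keeps a term proportional to $n^{-1/2}$, whereas the only ``negative leftover'' is the small quantity $n^{-1/2}m^{(n)}\ELL(\tfrac{j-j_0+1}n)$ (plus the $\epsilon$-slack and the drift term $\xxx_j$).

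To finish~\ref{FLVRa}, restrict to the infinitely many $j$ with $\xxx_j\geq0$ and let $j\to\infty$ along them; since $\ELL(\vartheta)\to0$, the leftover $n^{-1/2}m^{(n)}\ELL(\tfrac{j-j_0+1}n)\to0$, so for any prescribed $\delta>0$ one first picks $\epsilon$ small and then $j$ large to get $\essinf^{(\Pr_{j})}[R\mid D_j]/\E[R\mid D_j]>-\delta$ with $\E[R\mid D_j]\in(0,\infty)$, which is FLVR. For~\ref{FLVRb}, Hölder continuity of $\JODD$ with index $>1/2$ makes $\xxx_j\sqrt n\to0$ as $n\to\infty$ (exactly as in the proofs of Theorems~\ref{thm:arbitrage} and~\ref{thm:L}), so for large $n$ the $\xxx_j$ term is negligible; since $\inf_\vartheta|\ELL(\vartheta)|=0$, choose $j$ with $\tfrac{j-j_0+1}n$ close to an argument where $|\ELL|$ is arbitrarily small --- using piecewise continuity~\eqref{Ka} to approach it from the correct side --- and run the same two estimates, now with $\liminf_n\inf_i\E[\xi_i^{(n)}]/m^{(n)}>-1$ keeping the positive term uniformly bounded below; this gives FLVR for all large $n$. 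If in addition $\xxx_j\geq c>0$ uniformly in $j$, then already $\essinf^{(\Pr_{j})}[R\mid D_j]\geq c-n^{-1/2}m^{(n)}\ELL(\tfrac{j-j_0+1}n)-\epsilon>0$ for suitable $j$ and small $\epsilon$, so the free lunch is in fact an arbitrage.

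The main obstacle is the bookkeeping in~\ref{FLVRb}: one must coordinate the choices of $j$, $\epsilon$ and $n$ so that $\essinf^{(\Pr_{j})}[R\mid D_j]\to0$ while $\E[R\mid D_j]$ stays bounded away from $0$, and in doing so control $\xxx_j$ at a step $j$ that may be forced large when $\inf_\vartheta|\ELL(\vartheta)|=0$ is attained only as $\vartheta\to\infty$; under~\ref{FLVRa} the hypothesis $\xxx_j\geq0$ disposes of this, and under~\ref{FLVRb} the Hölder regularity does, provided one works over a time window on which the infimum of $|\ELL|$ is (essentially) attained so that the locally bounded drift $a$ is genuinely bounded there. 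The remaining ingredients --- $\Pr_{j_0}[D_j]>0$, finiteness of $\E[R\mid D_j]$ from boundedness of the $\xi_i$, and the elementary arithmetic of the ratio~\eqref{ratio*} --- are routine.
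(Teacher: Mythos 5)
Your proposal is correct and follows essentially the same route as the paper's own proof: the telescoping of $\esssup \yyy_j+\essinf \zzz_{j+1}$ to $-m^{(n)}\ELL(\tfrac{j-j_0+1}{n})$, the positive-probability event that the history comes within $\epsilon$ of its $\esssup$, and the ratio analysis in which the mean stays positive because $\ELL(1/n)\bigl(1+\inf_i\E\xi_i^{(n)}/m^{(n)}\bigr)$ is bounded away from zero while the worst-case leftover vanishes with $\ELL(\tfrac{j-j_0+1}{n})$. The only differences are cosmetic (you keep the two bounds separate rather than writing out the single ratio display \eqref{FLVRbracket}), and your explicit flagging of the coordination of $j$, $\epsilon$ and $n$ in part~\ref{FLVRb} is if anything slightly more careful than the paper's.
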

\begin{proof} 
  We prove only the case of positive $\ELL$. Just like in the proof of
  Theorem~\ref{thm:L} part~\ref{nullellerfortegnsskift}, $\sqrt n\
  \big(\esssup y_j+\essinf z_{j+1} \big)$ will telescope to $-m^{(n)}
  \ELL(\tfrac{j-j_0+1}{n})$, which can be made arbitrarily close to
  $0$; a $\Pr_{j_0}$-positive event will be it falling within
  $\epsilon/n$ of $-m^{(n)} \ELL(\tfrac{j-j_0+1}{n})$. It will turn
  out that this takes care of the numerator of \eqref{ratio*} in the
  FLVR definition. For the denominator, we need the expected return:
  \begin{align}
    x_j+\esssup y_j+\E[z_{j+1}]&=x_j+\frac1 {\sqrt n}\Big[-m
    \ELL(\tfrac{j-j_0+1}{n})+m \ELL(\tfrac1n)+
    \ELL(\tfrac1n)\E\xi_{j+1}\Big] %
    \notag\\ &%
    = x_j+ \frac m {\sqrt
      n}\Big[- \ELL(\tfrac{j-j_0+1}{n})+ \ELL(\tfrac1n)\big(1+
    \frac{\E\xi_{j+1}}m\big)\Big]\label{FLVRbracket}
    \end{align}
    \begin{enumerate}
    \item %
      Now passing through a subsequence with nonnegative $x_j$, then
      $\ELL(\tfrac{j-j_0+1}{n})$ will vanish and \eqref{FLVRbracket}
      will be positive from the assumption; this takes care of the
      denominator of~\eqref{ratio*}. Assuming that the numerator is
      negative (otherwise there is arbitrage), the
      ratio~\eqref{ratio*} will on the $\Pr_{j_0}$-positive event of
      the history falling within $\epsilon/\sqrt n$ of its $\esssup$,
      exceed
      \begin{align*}
        \frac{-m \ELL(\tfrac{j-j_0+1}{n})-\epsilon}
        {m \Big[- \ELL(\tfrac{j-j_0+1}{n})+ \ELL(\tfrac1n)\big(1+
          \frac{\E\xi_{j+1}}m\big)\Big]}
      \end{align*}
      which tends to $0$ as $j$ and $\epsilon^{-1}$ grow.
    \item The ratio~\eqref{ratio*} becomes -- with
      $\epsilon=\epsilon_n$ possibly $n$-dependent --
      \begin{align*}
        \frac{%
          \frac{m}{\sqrt n}%
          \Big[\frac{x_j\sqrt n}{m}-\ELL(\tfrac{j-j_0+1}{n})
          -\frac{\epsilon}{m\sqrt n}\Big]%
        }%
        { \frac{m}{\sqrt n}%
          \Big[\frac{x_j\sqrt n}{m}-\ELL(\tfrac{j-j_0+1}{n})
          -\frac{\epsilon}{m\sqrt n}+ \ELL(\tfrac1n)\big(1+
          \frac{\E\xi_{j+1}}m\big)\Big]%
        }
      \end{align*}
      First, cancel $m/\sqrt n$. Then, observe that as in the proof of
      Theorem~\ref{thm:L} part~\ref{proofa}, we can choose $j$
      depending on $n$ as to approximate the appropriate $\vartheta$,
      or possibly the appropriate sequence of $\vartheta$'s, so that
      $\ELL(\tfrac{j-j_0+1}{n})$ vanishes in the limit, along with --
      by assumption -- everything involving $x_j$. Then for suitably
      small $\epsilon$, then $\ELL(1/n)$ will make the denominator
      (and hence the expectation) positive, while the numerator can be
      chosen arbitrarily small.
    \end{enumerate}
  \end{proof}

\begin{rem}\label{rem:stygghistorikk}
  Notice that the statements of Theorem~\ref{thm:L} and of
  Theorem~\ref{thm:FLVR} part~\ref{FLVRb}, do not depend on what $t_0$
  is, and what history the agent faces upon entry. It is certainly not
  obvious that it should be this way.  The setup of
  Theorem~\ref{thm:arbitrage} does not rule out a priori that there
  could be an arbitrage initially, to be exploited at a later stage if
  a positive event $D_j$ occurs, but which with positive probability
  disappears for good. (In more formal terms, the $\xi_{j+1}$ could be
  drawn so that not only would $\Pr_{j_0+1}[D_j]=0$, but also
  $\Pr_{j_0+1}[\Pr_i[D_j]>0]=0$ for all $i>j_0$.) But under the
  applicability of Theorem~\ref{thm:L} or Theorem~\ref{thm:FLVR}
  part~\ref{FLVRb}, this will not be the case: the arbitrage, resp.\
  FLVR, will show up for large enough $n$ regardless of whether the
  agent enters after a long ``bad'' period which hampers future
  prospects; for fine enough discretisation, there will always be a
  positive event where the arbitrage could materialise. This is not to
  say that the probability of this event is independent of history,
  nor that the choice of positive event is -- only the binary question
  of existence.  Under these results, regardless how disadvantageous
  the development has been, the strategy of calmly waiting for lunch
  time, will always yield positive expected value.\slutt
\end{rem}

\section{Some examples and non-examples}
We will in the following discuss a few cases. Throughout this section,
assume common symmetric support $[-m,m]$.
\begin{enumerate}%
\item Brownian motion is not prone to arbitrages in the discretised
  version; we have $K(t,s)=1_{t>s>0}$, so there is no contribution
  from history.
\item The Ornstein--Uhlenbeck process (mean-reverting to $0$) admits
  the representation $\ELL(\vartheta)=\ELL(0)\cdot e^{-v t}$ with
  $v>0$. This $\ELL$ satisfies Theorem~\ref{thm:FLVR}, which will
  yield FLVR (but, easily, not arbitrage) if choosing the distribution
  of the $\xi_i$ to comply with the assumptions. If there is positive
  drift (mean-reversion to a positive level $\mu$), then the
  discretised version admits arbitrage.  It should be noted though,
  that in the continuous-time model, a portfolio of $\eta(t)$ yields a
  wealth process dynamics of %
  $\eta(t)\d Z(t)=\eta(t)[v\cdot(\mu-Z(t))+\sigma\d W(t)\big]$, where
  there is an arbitrarily big upside for given volatility level, by
  waiting for $Z$ to become negatively large. However, the continuous
  model remains arbitrage-free, regardless of $\mu$.
\item \authorrefnametobedeleted{Sottinen }\citet{MR1849425} considers
  fractional Brownian motion with Hurst parameter $H>1/2$, using the
  representation $$K(t,s)=\int_s^t(u/s)^{H-1/2}(u-s)^{H-3/2}\d u$$ (up
  to an irrelevant positive constant), so that $K(t^+,t)=0$ and $K'_1$
  is positive. Then by Remark~\ref{KTT=0} we will have Theorem
  \ref{thm:arbitrage} applying, as
  $\JODD(t)=\int_{0}^{t_0}\int_s^t(u/s)^{H-1/2}(u-s)^{H-3/2}\d u\, \d
  W(s)$ is differentiable at $t=T^-$ (just interchange order of
  integration).

  Furthermore, by Remark~\ref{rem:stygghistorikk}, the arbitrage holds
  regardless of history. No matter how bad (and how long!) the initial
  period until entry is, there is still a positive event that a free
  lunch will actually manifest.
  \item Maybe a more common representation for fractional Brownian
motion is, for any $H\neq1/2$ and up to a constant, 
$$K(t,s)=(t-s)^{H-1/2}-((-s)_+)^{H-1/2},$$
corresponding to
$\ELL(\vartheta)=\vartheta^{H-1/2}$.  Let us assume that the $\xi_i$
have the same support.  $\ELL$ is monotonous, so then conditions
\eqref{L+} hold.  Now the results are different for positively
($H>1/2$) and negatively ($H<1/2$) autocorrelated fBm:
\begin{itemize}\item In the case $H>1/2$, $\ELL(0^+)=0$ and $\ELL$ is
  increasing.  Furthermore, $\JODD$ is H\"o{}lder continuous of order
  up to $H$. Theorem \ref{thm:arbitrage} applies, by
  Remark~\ref{KTT=0}.
\item In the case $H<1/2$, $\ELL(0^+)=\infty$ while
  $\ELL(\infty)=0$. Then for at least half of the cases,
  Theorem~\ref{thm:FLVR} part~\ref{FLVRa} applies.
\end{itemize}
\label{item:fBm}
\item \citet{MR1434408} proposes a modification of fractional Brownian
  motion, in order to eliminate the arbitrage but preserve the long
  run memory properties which motivated the use of fBm in finance in
  the first place. Rogers gives a specific (monotone) example
  \begin{align}\label{Rogers}
    \ELL(\vartheta)=k\cdot (\vartheta^2+v)^{(H-\frac12)/2},
  \end{align}
  but suggests more generally to choose $\ELL$ such that $\ELL(0)=1$,
  $\ELL'(0)=1$, and has the same $\sim \vartheta^{H-1/2}$ behaviour
  for large $\vartheta$.  This behaviour, tending to $\infty$ for
  $H>1/2$ and $0$ for $H<1/2$, is sufficient to yield the same results
  as for example~\ref{item:fBm}.\label{item:Rogers}
\item \label{item:mix} For a mix between a fractional and an
  (uncorrelated) ordinary Brownian motion, there is a very peculiar
  result by \authorrefnametobedeleted{Cheridito }\citet{MR1873835}: if
  the fBm part has $H>3/4$, then it behaves just as drift (which for
  example means that it does not enter in the Black--Scholes
  formula). For $H\leq3/4$, there is still arbitrage as if there were
  no Brownian component. However, in our case, such a process works
  like example~\ref{item:Rogers} when $H>1/2$: mixing in Brownian
  motion at volatility $\sigma$, we get $\ELL(t-s)$ replaced by
  $\sqrt{\sigma^2+\ELL(t-s)^2}=(\sigma^2 + (t-s)^{2H-1})^{1/2}$ which
  for $H>1/2$ works ana\-logous to \eqref{Rogers}, and will admit
  arbitrage in the discretisation. There is apparently nothing
  happening at the Cheridito threshold of $3/4$.
\end{enumerate}

We end this section by pointing out that not only are the discrete
markets possibly different than their weak limits when it comes to
\emph{existence\/} of free lunches; the arbitrages themselves might
occur from different properties of $K$.  In the canonical models,
either of the properties $\ELL(0^+)=0$ and $\ELL(+\infty)=+\infty$
will lead to arbitrages, and the latter will be due to the long-run
memory. The long-run memory was arguably the reason why fractional
Brownian motion $\ELL(\vartheta)=\vartheta^{H-1/2}$ (with $H>1/2$) was
suggested in the first place as driving noise for financial markets,
and the Rogers proposal of example~\ref{item:Rogers} above, leaves
that long memory in the process. Let assume that $t_0=0$, and that,
coincidentally, $x_{j_0}=x_{j_0+1}=0$ in order to isolate short-term
effects.  Fix $n$ for the moment. Then the first-step innovation is
symmetric, and the next one cannot lead to arbitrage either, as
$\ELL(2/n)-\ELL(1/n)<\ELL(1/n)$ by concavity. The minimum number of
steps (after $j_0$) for the arbitrage for the fBm case, is the
smallest integer $> 2^{1/(H-1/2)}$ (equality suffices if the $\esssup$
has point mass), so that the absolute minimum is 5 steps, obtained for
$H$ above $\approx 0.931$. Of course, as the partition refines and $n$
grows, this happens in shorter \emph{time}, thus approaching the
continuous-time setup where the profit instantly increases from $0$
(\authorrefnametobedeleted{this author, }\citet{NCF-FL}). On the other
hand, even when mixed with ordinary Brownian motion, the
examples~\ref{item:Rogers} and~\ref{item:mix} yield arbitrages;
boosting up the ordinary Brownian part, will merely require a longer
beneficial period before the arbitrage manifests itself. Those
arbitrages are due to the \emph{long run\/} behaviour -- namely, the
fact that $\ELL$ tends to $+\infty$ on the long run.

\section{Concluding remarks}
We have seen that discrete-time random walk markets may behave
radically different from their weak limits, as the former may admit
arbitrages or FLVRs which vanish in the limit. Furthermore, quite a
few of our estimates may be sharpened and the results likely ramified.
That is in the author's opinion not a main concern. Rather, it has
turned out that a type of result once interpreted as another
objectional property of the fBm's, is simply to be expected if one
models moving average processes in such a careless way.

One could certainly try to remedy the problem by choosing wide
supports with low probability of the arbitrages manifesting
themselves. Arguably, a practitioner should be worried even at far
less radical modeling issues than the binary \emph{existence of\/}
arbitrage, and a quick fix which merely covers up the most obvious
undesirable property, might not be an adequate solution to the
inherent problem.

\appendix

\newcommand{\zetabuy}{\zeta_*}
\newcommand{\zetasell}{\zeta^*}
\newcommand{\Sbuy}{S_*}
\newcommand{\Ssell}{S^*}

\section*{Appendix: Proof of Proposition~\ref{simplevsfull}}

Put $\units=\period=1$. Denote the buying and selling prices in the
simple model \eqref{profit} by $\zetabuy$ and $\zetasell$, and in the
full model by $\Sbuy=G(\zetabuy)$ and $\Ssell=G(\zetasell)$. Observe
first that we may take $\zetabuy$ bounded by restricting $D_j$ without
avoiding the property $\Pr_{j_0}[D_j]>0$, and we will do so in the
following. Assume that the simple model \eqref{profit} has arbitrage
for transaction cost $c>0$; then
  \begin{align}\zetasell&\eqslantgtr\zetabuy+c\notag
    \intertext{which by applying $G$ and rearranging, is equivalent to}
    \Ssell-\Sbuy&-\lambda(\Lambda^*(1,\Sbuy,\Ssell)+\Lambda_*(1,\Sbuy))%
    \notag\\&%
    \eqslantgtr G(\zetabuy +c)-G(\zetabuy)-\lambda(\Lambda^*(1,G(\zetabuy),G(\zetasell))+\Lambda_*(1,G(\zetabuy)))\notag
    \intertext{so we have arbitrage if the right hand side is nonnegative, so it suffices that}0<\lambda&\leq\frac{G(\zetabuy +c)-G(\zetabuy)}{\Lambda^*(1,G(\zetabuy),G(\zetasell))+\Lambda_*(1,G(\zetabuy))}
\label{graveacute}
  \end{align}
  As already remarked, we can assume $\zetabuy$ bounded, so it
  suffices to bound $\Lambda^*$ for given $\zetabuy$. By at most
  linear growth, 
    \begin{math}
      \Lambda^*(1,\Sbuy,\Ssell)\leq\lambda_0(\Sbuy) +\lambda_1(\Sbuy)\cdot \Ssell,
    \end{math}
    where $\lambda_0$, $\lambda_1$ are locally bounded functions of
    $\Sbuy$, the return is%
\begin{align*}
    Y&=\Ssell-\Sbuy-\lambda(\Lambda^*(1,\Sbuy,\Ssell)+\Lambda_*(1,\Sbuy))
    \\&\geq\Ssell(1-\lambda\lambda_1(\Sbuy))-\Sbuy(1-\lambda\lambda_1(\Sbuy))-\lambda\lambda_1(\Sbuy)\Sbuy -\lambda(\Lambda_*+\lambda_0)
    \intertext{By boundedness of $\Sbuy$ we can take
      $\lambda\lambda_1(\Sbuy)<1$, in which case the return will be
      $\eqslantgtr0$ if} \Ssell-\Sbuy &\eqslantgtr \lambda
    \frac{\lambda_0+\lambda_1\Sbuy
      +\Lambda_*}{1-\lambda\lambda_1(\Sbuy)}
  \end{align*}
  which at least equals $\lambda\cdot 2(\lambda_0+\lambda_1\Sbuy
  +\Lambda_*)=:\lambda\tilde\Lambda$, where $\tilde\Lambda$ is a
  locally bounded function of $\Sbuy$ alone. Hence we can consider
  the problem with $\tilde\Lambda$ in place of $\Lambda_*$ and
  assuming $\Lambda^*=0$, and then the right-hand side of
  \eqref{graveacute} will not depend on the selling price
  $\zetasell$. We are done with the arbitrage part of the
  proposition.
  
  For FLVR, it suffices to point out that Jensen's inequality improves
  both the upside and downside for convex $G$, compared to for linear
  ones.

\bibliographystyle{apalike}
\bibliography{1206.5756}

\end{document}